\documentclass[11pt]{article}

\usepackage[margin=1.1in]{geometry}
\usepackage[dvipsnames]{xcolor}
\usepackage{amsmath,amssymb,amsthm,mathtools}
\usepackage{microtype}
\usepackage{booktabs,array}
\usepackage{enumitem}
\usepackage{authblk}
\usepackage{algorithm}
\usepackage{algpseudocode}
\usepackage{tikz}
\usetikzlibrary{arrows.meta,fit}
\usepackage[colorlinks=true,linkcolor=blue!50!black,citecolor=green!35!black,urlcolor=blue!50!black]{hyperref}

\usepackage[color=Thistle]{todonotes}
\setlist{itemsep=2pt,topsep=4pt}
\newtheorem{theorem}{Theorem}[section]
\newtheorem{lemma}[theorem]{Lemma}
\newtheorem{proposition}[theorem]{Proposition}
\newtheorem{corollary}[theorem]{Corollary}
\newtheorem{fact}[theorem]{Fact}

\theoremstyle{definition}
\newtheorem{definition}[theorem]{Definition}

\newtheorem{example}[theorem]{Example}
\theoremstyle{remark}

\newcommand{\polylog}{\operatorname{polylog}}
\newcommand{\Cop}{\mathrm{CO}}
\newcommand{\TC}{\mathrm{TC}}
\newcommand{\dmax}{d_{\max}}

\newcommand{\prob}[1]{\textnormal{\textsc{#1}}}

\newcommand{\HR}{\textnormal{\textsc{Heavy-Row}}}
\title{\bfseries Quantum Query Complexity of Tournament Solutions}
\author{Pragya Arora \qquad Udit Kumar\\
  \small Indian Institute of Technology Gandhinagar, India\\
  \small\texttt{\{pragya.arora, kumarudit\}@iitgn.ac.in}}
\date{ }

\begin{document}
\maketitle
\begin{abstract}
A tournament solution records a set of winners from a tournament, that is,
from the outcomes of all pairwise comparisons among $N$ alternatives. We study
the quantum query complexity of computing tournament solutions when the
outcome of each comparison has to be learned by a query. Classically, the
picture is uniform: the Copeland set, the uncovered set and the top cycle all
require $\Theta(N^2)$ queries, and so does finding a single vertex of maximum
out-degree. We show that the quantum query complexity of the top cycle can be computed
with $\tilde\Theta(N)$ queries, and a Condorcet winner can be found with $\tilde \Theta(\sqrt N)$ queries. Finding a
strong king requires $\Theta(N^{3/2})$ queries, the same as finding a vertex of
maximum out-degree. We also give a self-contained alternative proof of the known $\Theta(N^{3/2})$ bound for finding a vertex of maximum degree, established by Gilani, Mande, Patro, and Raja~\cite{GMPR26}. Our lower bounds reduce counting
problems to tournament problems using Ambainis's adversary method. For
the top cycle, we find a Hamiltonian path and locate the prefix forming
the top cycle by divide and conquer.
\end{abstract}

\section{Introduction}\label{sec:intro}
 
A \emph{tournament} is a directed graph in which every pair of distinct
vertices is joined by exactly one arc. Tournaments model round-robin
competitions, pairwise-majority relations in voting, and many other settings
in which alternatives are compared in pairs. A \emph{tournament solution} maps
every tournament to a nonempty set of ``winning'' vertices. Standard examples
include the Condorcet winner, the Copeland set ($\Cop(T)$), the uncovered set ($\textsc{UC}(T)$)
(equivalently, the set of kings), the Banks set and the top cycle ($\TC(T)$);
see~\cite{Las97,BBH16} for more details.
 
In many applications the tournament is only given implicitly: learning the
outcome of a single comparison may require an experiment, a clinical trial, or
a poll~\cite{MD24}. This motivates the \emph{query model}, in which an
algorithm learns the orientation of an edge by querying it and is charged one
unit per query. In this paper we study the \emph{quantum} query model, in
which queries may be made in superposition. Quantum query complexity is a
natural setting for such problems: it captures the power of quantum
algorithms such as Grover search and amplitude estimation, and it admits
strong lower-bound techniques such as the adversary method.

Maiti and Dey~\cite{MD24} showed that computing the Copeland set, the Slater
set, the Markov set, the bipartisan set, the uncovered set, the Banks set and
the top cycle requires $\Omega(N^2)$ queries, both deterministically and for
randomized algorithms. These lower bounds are proved by hiding a single
decisive edge among $\Theta(N^2)$ unqueried ones, which a quantum algorithm
can find with Grover search using $O(N)$ queries. For instance, the hard
instances for the Copeland set relate a regular tournament $T_{\mathrm{reg}}$
to the $\binom N2$ tournaments obtained by reversing one edge; in Ambainis's
adversary theorem (Theorem~\ref{thm:ambainis}) this gives $M=\binom N2$ and
$M'=L=L'=1$, and hence only an $\Omega(N)$ quantum lower bound. Consequently, the quantum complexity of each tournament solution has to be
revisited, and stronger quantum lower bounds require different constructions;
ours are based on counting.

Classically, the query complexity of tournament solutions is remarkably
uniform. Apart from the Condorcet winner, which can be found with $\Theta(N)$
queries~\cite{BRS97,Pro08}, all of the solutions listed above require
$\Theta(N^2)$ queries~\cite{MD24}, that is, reading a constant fraction of the
tournament, and even finding a single vertex of maximum out-degree requires
$\Theta(N^2)$ queries~\cite{GJR20}. Likewise, Ismaili Alaoui and Mande~\cite{AM25} recently showed that finding a
single strong king requires $\Theta(N^2)$ queries, answering a question of
Biswas et al.~\cite{BJRS}. In particular, the three classical solutions $\mathrm{CO}(T)\ \subseteq\ \mathrm{UC}(T)\ \subseteq\ \mathrm{TC}(T),$
namely, the Copeland set, the uncovered set and the top cycle, cannot be
distinguished by their classical query complexity. In the quantum setting, we show that the largest of the three, the top cycle,
is the easiest to compute, and the smallest, the Copeland set, is the hardest;
the uncovered set requires $\Omega(N^{3/2})$ and at most $O(N^2)$ queries.

\subsection{Results in this paper}\label{sec:results}
 The results established in this paper, together with previously known results, are summarized in Table~\ref{tab:results}. Throughout, \(N\) denotes the number of vertices, \(Q(\cdot)\) denotes bounded-error quantum query complexity, and \(\tilde{O}(\cdot)\) hides factors polylogarithmic in \(N\).
 
\begin{table}[H]
\centering
\small
\renewcommand{\arraystretch}{1.15}
\begin{tabular}{@{}llll@{}}
\toprule
Problem & Classical & Quantum & Result \\
\midrule
Condorcet winner
  & $\Theta(N)$~\cite{BRS97,Pro08}
  & $\tilde\Theta(\sqrt N)$
  & Thm.~\ref{thm:condorcet} \\
Top cycle
  & $\Theta(N^2)$~\cite{MD24}
  & $\tilde\Theta(N)$
  & Thm.~\ref{thm:topcycle-upper}, \ref{thm:topcycle-lb} \\
Max.\ (out-)degree vertex
  & $\Theta(N^2)$~\cite{GJR20}
  & $\Theta(N^{3/2})$~\cite{GMPR26}
  & Thm.~\ref{thm:maxdegree-general}$^{*}$, Thm.~\ref{thm:max-outdegreetournament}$^{*}$ \\
Strong king
  & $\Theta(N^{2})$ \cite{AM25}
  & $\Theta(N^{3/2})$
  & Thm.~\ref{thm:strongkings} \\
Uncovered set
  & $\Theta(N^2)$~\cite{MD24}
  & $\Omega(N^{3/2})$, $O(N^2)$
  & Thm.~\ref{thm:uncovered-lb} \\
Copeland set
  & $\Theta(N^2)$~\cite{MD24}
  & $\Theta(N^2)$
  & Thm.~\ref{thm:copeland} \\
\bottomrule
\end{tabular}
\caption{Query complexity of tournament solutions on $N$ vertices. Classical
bounds hold for deterministic and randomized algorithms. ${}^{*}$The quantum bound for the maximum degree is due
to~\cite{GMPR26}; we give a different proof.}
\label{tab:results}
\end{table}

Our most surprising result concerns the Copeland set, the set of all vertices
of maximum out-degree. We show that computing it requires $\Theta(N^2)$ quantum queries, matching the classical bound. This is in sharp contrast with a single Copeland winner, which can be found with
$\Theta(N^{3/2})$ quantum queries.
 
A Condorcet winner, if it exists, is the unique king of the tournament. Using
the $\tilde O(\sqrt N)$-query algorithm for finding a king of Mande,
Paraashar and Saurabh~\cite{MPS23}, followed by one Grover search, we find a
Condorcet winner or certify that none exists with $\tilde O(\sqrt N)$ queries,
and a reduction from $\textsc{OR}$ shows that $\Omega(\sqrt N)$ queries are
necessary. The same bounds hold for the Condorcet non-loser set.
 
For the top cycle we give an algorithm with $O(N\log^2 N)$ queries, which is
optimal up to logarithmic factors: the construction of Maiti and
Dey~\cite{MD24} embeds $\textsc{OR}$ on $\Theta(N^2)$ bits, which gives an
$\Omega(N)$ lower bound. Since a tournament is strongly connected if and only
if its top cycle contains every vertex, our algorithm also decides strong
connectivity of tournaments with $O(N\log^2N)$ queries. This is in contrast to
general directed graphs, for which strong connectivity requires
$\Theta(N^{3/2})$ queries in the adjacency-matrix model~\cite{DHHM06}.
 
The remaining problems depend on out-degrees, and their hardness comes from
counting rather than searching. The starting point is the problem of finding a
vertex of maximum degree in a graph, or of maximum out-degree in a tournament.
Gilani, Mande, Patro and Raja~\cite{GMPR26} recently showed that both
problems have quantum query complexity $\Theta(N^{3/2})$, answering a question
of~\cite{MPS23}. Their lower bound reduces from the composed function
$\textsc{OR}_m\circ\textsc{PrMaj}_m$ and relies on the composition theorem of
Lee et al.~\cite{LMRSS11}. In Section~\ref{sec:maxdeg} we give an alternative
proof of their result. We reduce from a promise problem that we call
\textsc{Heavy-Row}: given an $m\times m$ Boolean matrix in which every row has
weight $k$, except possibly one row of weight $k+1$, decide whether such a
heavy row exists. This problem is a restriction of
$\textsc{OR}_m\circ\textsc{PrMaj}$ (see Section~\ref{sec:prelim-heavyrow}),
and we prove its $\Omega(m^{3/2})$ lower bound directly with Ambainis's
adversary method, without the composition theorem. We include this proof
because it is self-contained and, more importantly, because our tournament
construction has additional structure on which the rest of the paper builds:
the same construction shows that finding a \emph{strong king}~\cite{HC03}
also requires $\Theta(N^{3/2})$ queries (Section~\ref{sec:strongking}), and a
variant with many heavy rows gives our $\Omega(N^2)$ lower bound for the
Copeland set (Section~\ref{sec:copeland}). The upper bound for strong kings is
immediate, since every vertex of maximum out-degree is a strong king. Classicaly, finding a strong king requires $\Theta(N^2)$ queries~\cite{AM25}, this is a polynomial quantum speedup, but a much
smaller one than for kings, which can be found with $\tilde O(\sqrt N)$
quantum queries~\cite{MPS23}. 
 
Finally, we show that computing the uncovered set requires $\Omega(N^{3/2})$
queries. This separates the uncovered set from two closely related problems:
a single king can be found with $\tilde O(\sqrt N)$ queries~\cite{MPS23}, and
the top cycle, which contains every king, can be computed with
$\tilde O(N)$ queries. Together with the Copeland set, these results show
that, quantumly, computing a whole tournament solution can be much harder than
finding one of its members: $\tilde\Theta(\sqrt N)$ versus $\Omega(N^{3/2})$
for kings, and $\Theta(N^{3/2})$ versus $\Theta(N^2)$ for Copeland winners.

\subsection{Related work}\label{sec:related}

\paragraph{Classical query complexity of tournament solutions:}
The query complexity of tournament solutions was studied systematically
by Maiti and Dey~\cite{MD24}. Earlier, Balasubramanian, Raman and 
Srinivasaragavan~\cite{BRS97}, and  Procaccia~\cite{Pro08} showed that deciding whether a tournament has a
Condorcet winner takes exactly $2N-\lfloor\log N\rfloor-2$ queries in the worst
case. Maiti and Dey proved
that every deterministic or randomized algorithm for the Copeland set, the
Slater set, the Markov set, the bipartisan set, the uncovered set, the Banks
set and the top cycle must make $\Omega(N^2)$ queries. Goyal, Jayapaul and Raman~\cite{GJR20} independently
proved the $\Omega(N^2)$ bound for the Copeland set; in fact, they showed that
even finding a single vertex of maximum out-degree requires $\Omega(N^2)$
queries, and more generally studied the query complexity of finding vertices
of given degrees. Arora, Dey and Misra~\cite{ADM26} showed that the classical $\Omega(N^2)$
barrier can be bypassed for tournaments close to transitive, given extra
structural queries. We instead keep arbitrary tournaments and ordinary edge
queries, and allow the queries to be quantum. We refer to~\cite{Las97,BBH16,Suk21} for more details
of tournament solutions.

\paragraph{Kings and strong kings:}
Kings were introduced by Landau~\cite{Lan53}, who showed that every vertex of
maximum out-degree is a king, so every tournament has a king. A tournament has
exactly one king if and only if this king is a Condorcet
winner~\cite{Mau80}, and there are tournaments in which every vertex is a
king~\cite{Rei82}. Shen, Sheng and Wu~\cite{SSW03} gave a deterministic
algorithm that finds a king with $O(N^{3/2})$ queries and showed that
$\Omega(N^{4/3})$ queries are necessary; closing this gap remains open. Mande,
Paraashar and Saurabh~\cite{MPS23} studied the randomized and quantum query
complexity of finding a king and showed in particular that
$\tilde O(\sqrt N)$ quantum queries suffice. Variants of this problem were
studied in~\cite{BJRS,LRT22,AGNS24}, and its communication complexity
in~\cite{MPSS24}. Strong kings were introduced by Ho and Chang~\cite{HC03},
who observed that every vertex of maximum out-degree is a strong king; see
also~\cite{CCCW08}. Biswas, Jayapaul, Raman and Satti~\cite{BJRS} asked
whether a strong king can be found with $o(N^2)$ queries, and Ismaili Alaoui
and Mande~\cite{AM25} recently answered this in the negative: the
deterministic and randomized query complexity of finding a strong king is
$\Theta(N^2)$. They also showed that deciding whether an arbitrary directed
graph has a king requires $\Theta(N^2)$ queries, in contrast to tournaments,
which always have one. Our $\Theta(N^{3/2})$ bound thus gives a polynomial quantum speedup for
finding a strong king, from $\Theta(N^2)$ to $\Theta(N^{3/2})$ queries; in
contrast, a king can be found with only $\tilde O(\sqrt N)$ quantum queries~\cite{MPS23}.

\paragraph{Quantum query complexity of graph problems:}
Dürr, Heiligman, Høyer and Mhalla~\cite{DHHM06} determined the quantum query
complexity of several graph problems, including connectivity and strong
connectivity, in the adjacency-matrix and adjacency-list models. Mande,
Paraashar and Saurabh~\cite{MPS23} asked about the quantum query complexity of
finding a vertex of maximum out-degree in a tournament. This question was answered by Gilani, Mande, Patro and Raja~\cite{GMPR26}, who proved an
$\Omega(N^{3/2})$ lower bound by combining the composition theorem of Lee et
al.~\cite{LMRSS11} with lower bounds for $\textsc{OR}$ and for a promise
version of majority; Section~\ref{sec:maxdeg} gives an alternative proof of
their result. Finally, our top-cycle algorithm starts by finding a Hamiltonian
path by insertion sort; the connection between sorting and Hamiltonian paths
in tournaments goes back to Bar-Noy and Naor~\cite{BNN90}.

\section{Preliminaries}\label{sec:prelims}
 
For a positive integer $n$, we write $[n]=\{1,\dots,n\}$. For a
string $x\in\{0,1\}^n$, we write $|x|$ for its Hamming weight.
 
\subsection{Tournaments and tournament solutions}\label{sec:prelim-tournaments}
 
Let $T=(V,E)$ be a tournament with $|V|=N$. We write $u\to v$ if $(u,v)\in E$
and say that $u$ \emph{beats} $v$. The \emph{out-degree} (or \emph{Copeland
score}) of $v$ is $d^+(v)=|\{w:v\to w\}|$, and $d_{\max}(T)=\max_v d^+(v)$. For distinct $u,v$ let
$p(u,v)=|\{w:u\to w\to v\}|$ be the number of directed paths of length two from
$u$ to $v$. A tournament is \emph{regular} if all its vertices have the same out-degree.
A standard example is the \emph{circulant tournament}: for odd $n$, it is the
tournament on the vertex set $\{v_0,\dots,v_{n-1}\}$ in which, for distinct
$i$ and $j$,
\[
  v_i\to v_j \iff (j-i)\bmod n\in\Bigl\{1,\dots,\tfrac{n-1}{2}\Bigr\}.
\]
 We study the following tournament solutions; see \cite{Las97,BBH16} for background and further examples.
 
\begin{definition}\label{def:solutions}
Let $T=(V,E)$ be a tournament.
\begin{itemize}
  \item A vertex $v$ is a \emph{Condorcet winner} if $d^+(v)=N-1$ and a
        \emph{Condorcet loser} if $d^+(v)=0$. The \emph{Condorcet non-loser
        set} is $\{v:d^+(v)\ge1\}$.
  \item The \emph{Copeland set} $\mathrm{CO}(T)$ is the set of vertices of
        maximum out-degree.
  \item A vertex $v$ is a \emph{king} if every other vertex is reachable from
        $v$ by a directed path of length at most two. The set of all kings is
        called the \emph{uncovered set} $\mathrm{UC}(T)$.
  \item A vertex $v$ is a \emph{strong king} \cite{HC03} if $p(v,u)>p(u,v)$
        for every $u$ with $u\to v$.
  \item A nonempty set $D\subseteq V$ is \emph{dominant} if every vertex of
        $D$ beats every vertex of $V\setminus D$. The \emph{top cycle}
        $\mathrm{TC}(T)$ is the unique smallest dominant set.
\end{itemize}
\end{definition}
 
 Every vertex $v$ of maximum out-degree is a king: if some vertex $u$ were not reachable from $v$ in at most two steps, then $u$ would beat $v$ and every out-neighbour of $v$, so $d(u)>d^+(v)$. Every king lies in the top cycle, since no vertex outside the top cycle can reach a vertex inside it.
Hence $\mathrm{CO}(T)\subseteq\mathrm{UC}(T)\subseteq\mathrm{TC}(T)$.

\begin{lemma}\label{lem:copeland-in-tc}
Every vertex of maximum out-degree in a tournament $T$ belongs to
$\mathrm{TC}(T)$.
\end{lemma}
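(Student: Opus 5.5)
The plan is a direct counting argument based on the defining property of the top cycle as a dominant set. It does not need the chain $\mathrm{CO}(T)\subseteq\mathrm{UC}(T)\subseteq\mathrm{TC}(T)$ from the preceding paragraph, although that chain already implies the lemma.

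Write $D=\mathrm{TC}(T)$. Since $D$ is dominant, every vertex of $D$ beats every vertex of $V\setminus D$. If $D=V$ there is nothing to prove, so assume $V\setminus D\neq\emptyset$ and fix any $u\in D$.

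First, bound the out-degree of a vertex outside $D$. Let $v\in V\setminus D$. Then $v$ loses to every vertex of $D$, so its out-neighbours all lie in $V\setminus D$, giving
\[
d^+(v)\le |V\setminus D|-1 .
\]

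Second, bound the out-degree of $u\in D$ from below. This vertex beats every vertex of $V\setminus D$, so
\[
d^+(u)\ge |V\setminus D| > d^+(v).
\]

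Hence no vertex outside $D$ attains $\dmax(T)$, and every vertex of maximum out-degree lies in $\mathrm{TC}(T)$. This argument works for any dominant set, so it applies in particular to the smallest one, $\mathrm{TC}(T)$. There is no real obstacle; the only point needing care is the degenerate case $D=V$, which is handled separately above.
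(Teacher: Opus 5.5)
Your proof is correct, but it takes a different route from the paper. The paper gives no separate proof of this lemma. It reads the lemma off the chain $\mathrm{CO}(T)\subseteq\mathrm{UC}(T)\subseteq\mathrm{TC}(T)$ established just before it. First, a vertex $v$ of maximum out-degree is a king, by Landau's argument: a vertex $u$ not reachable in two steps would beat $v$ and all of $v$'s out-neighbours, so $d^+(u)>d^+(v)$. Second, a king lies in $\mathrm{TC}(T)$ because no vertex outside the top cycle can reach a vertex inside it.

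You instead compare out-degrees directly across the cut $(D,V\setminus D)$. A vertex outside the dominant set $D$ has all its out-neighbours in $V\setminus D$, so its out-degree is at most $|V\setminus D|-1$. Every vertex of $D$ has out-degree at least $|V\setminus D|$.

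Your argument is shorter and uses only dominance, with no kings or reachability. It also proves more: for any dominant set $D$, not just the minimal one, every vertex of $D$ has strictly larger out-degree than every vertex outside $D$. The paper's route gives, as a by-product, the inclusion $\mathrm{CO}(T)\subseteq\mathrm{UC}(T)$, which it uses to frame the comparison of the Copeland set, the uncovered set and the top cycle.

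A small remark: the case $D=V$ needs no separate treatment, since then there is no vertex outside $D$ to rule out. What you do use is $D\neq\emptyset$ when you pick $u\in D$. This holds because dominant sets are nonempty by definition.
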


We next give a degree-based characterization of strong kings. The
following identity relates the number of directed two-paths between
the endpoints of an arc to their out-degrees.

\begin{lemma}\label{lem:2paths}
If $u\to v$, then $p(v,u)-p(u,v)=d^+(v)-d^+(u)+1$.
\end{lemma}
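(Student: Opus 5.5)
The plan is a direct double count over the $N-2$ vertices $w\notin\{u,v\}$. Fix an arc $u\to v$. I will sort each such $w$ by how it relates to $u$ and to $v$, and write the two path counts $p(u,v)$, $p(v,u)$ and the two out-degrees $d^+(u)$, $d^+(v)$ in terms of the sizes of these classes. The identity should then follow by subtraction.

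Define four classes of vertices $w\notin\{u,v\}$:
\begin{itemize}
  \item $A=\{w: u\to w,\ v\to w\}$;
  \item $B=\{w: u\to w,\ w\to v\}$;
  \item $C=\{w: w\to u,\ v\to w\}$;
  \item $D=\{w: w\to u,\ w\to v\}$.
\end{itemize}
Since $T$ is a tournament, every such $w$ lies in exactly one class. A two-path $u\to w\to v$ needs $u\to w$ and $w\to v$, so $p(u,v)=|B|$. Similarly $p(v,u)=|C|$. For the out-degrees, $u$ beats $v$ and every vertex of $A\cup B$, so $d^+(u)=|A|+|B|+1$. On the other side, $v$ does not beat $u$ and beats exactly the vertices of $A\cup C$, so $d^+(v)=|A|+|C|$.

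Subtracting gives
\[
d^+(v)-d^+(u)+1=|A|+|C|-|A|-|B|-1+1=|C|-|B|=p(v,u)-p(u,v),
\]
which is the claimed identity. There is no real obstacle here. The only point needing care is the $+1$: it comes from the arc $u\to v$ itself, which contributes to $d^+(u)$ but not to $d^+(v)$.
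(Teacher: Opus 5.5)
Your proof is correct and is essentially the paper's argument. The paper also lets $C$ be the set of common out-neighbours (your class $A$), writes $d^+(v)=p(v,u)+|C|$ and $d^+(u)=1+p(u,v)+|C|$, and subtracts; your four-class partition just makes the same bookkeeping fully explicit.
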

 
\begin{proof}
Let $C=\{w:u\to w\text{ and }v\to w\}$ be the set of common out-neighbours of
$u$ and $v$. Since $u\to v$, the vertex $u$ is not an out-neighbour of $v$.
Every out-neighbour $w$ of $v$ therefore satisfies either $w\to u$, giving a
$2$-path $v\to w\to u$, or $u\to w$, in which case $w\in C$. Hence,
$d^+(v)=p(v,u)+|C|$.
 
The out-neighbours of $u$ are $v$ itself together with vertices $w\neq v$.
Each such $w$ satisfies either $w\to v$, giving a $2$-path $u\to w\to v$, or
$v\to w$, in which case $w\in C$. Hence, $d^+(u)=1+p(u,v)+|C|$.
 
Subtracting the two identities gives $p(v,u)-p(u,v)=d^+(v)-d^+(u)+1$.
\end{proof}

This identity gives the following characterization.

\begin{lemma}[Characterization]\label{lem:strongking-char}
A vertex $v$ is a strong king if and only if $d^+(u)\le d^+(v)$ for every $u$ with $u\to v$.
\end{lemma}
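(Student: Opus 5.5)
The plan is to apply Lemma~\ref{lem:2paths} to each in-neighbour of $v$ and translate the strict two-path inequality into an out-degree inequality. Fix $v$ and let $u$ be any vertex with $u\to v$. By Lemma~\ref{lem:2paths}, applied to the arc $u\to v$,
\[
  p(v,u)-p(u,v)=d^+(v)-d^+(u)+1 .
\]
Hence $p(v,u)>p(u,v)$ holds exactly when $d^+(v)-d^+(u)+1>0$. Out-degrees are integers, so this strict inequality is equivalent to $d^+(v)-d^+(u)\ge 0$, that is, to $d^+(u)\le d^+(v)$.

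The definition of a strong king requires $p(v,u)>p(u,v)$ for every in-neighbour $u$ of $v$. By the equivalence above, this condition is the same as requiring $d^+(u)\le d^+(v)$ for every in-neighbour $u$. This gives both directions of Lemma~\ref{lem:strongking-char} at once.

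There is no real obstacle here. The only points needing care are the orientation in Lemma~\ref{lem:2paths}, which is stated for the arc $u\to v$ and so matches the in-neighbours $u$ of $v$ quantified in the definition, and the integrality step that turns ``$>-1$'' into ``$\ge 0$''. If $v$ has no in-neighbours, both conditions hold vacuously.
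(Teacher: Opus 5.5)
Your proof is correct and is essentially the paper's argument: apply Lemma~\ref{lem:2paths} to each arc $u\to v$, then use integrality of out-degrees to turn $d^+(v)-d^+(u)+1>0$ into $d^+(u)\le d^+(v)$. Your remarks on the orientation of the arc and on the vacuous case add some care, but they do not change the route.
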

 
\begin{proof}
By Lemma~\ref{lem:2paths}, for $u\to v$ the condition $p(v,u)>p(u,v)$ is
equivalent to $d^+(v)-d^+(u)+1\ge1$, i.e.\ $d^+(u)\le d^+(v)$.
\end{proof}
 \begin{lemma}\label{lem:maxdeg-strongking}
Every vertex of maximum out-degree is a strong king. In particular, every
tournament has a strong king.
\end{lemma}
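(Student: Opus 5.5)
The plan is to apply the characterization in Lemma~\ref{lem:strongking-char} directly, so that no counting of two-paths is needed. Fix a vertex $v$ with $d^+(v)=\dmax(T)$. By Lemma~\ref{lem:strongking-char}, $v$ is a strong king if and only if $d^+(u)\le d^+(v)$ for every in-neighbour $u$ of $v$, that is, for every $u$ with $u\to v$.

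This condition holds immediately. For any vertex $u$ at all, in particular any $u$ with $u\to v$, we have $d^+(u)\le \dmax(T)=d^+(v)$ by the definition of $\dmax(T)$ as the maximum out-degree. Hence the condition of Lemma~\ref{lem:strongking-char} is satisfied and $v$ is a strong king.

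For the second sentence, note that $V$ is finite and nonempty, so some vertex attains $\dmax(T)$. By the first part, that vertex is a strong king, so every tournament has one.

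There is no real obstacle, since all the work is contained in Lemma~\ref{lem:2paths} and Lemma~\ref{lem:strongking-char}. The one point to check is a degenerate case: if $v$ has no in-neighbours, as for a Condorcet winner or when $N=1$, the condition is vacuous and $v$ is still a strong king. This agrees with the definition, which only quantifies over $u$ with $u\to v$.
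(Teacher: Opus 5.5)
Your proof is correct and is essentially the paper's argument: since $d^+(u)\le d_{\max}(T)=d^+(v)$ for every vertex $u$, and in particular for every in-neighbour of $v$, Lemma~\ref{lem:strongking-char} immediately shows that $v$ is a strong king. Your remarks on the existence of a maximizer (the vertex set is finite and nonempty) and on the vacuous case of a vertex with no in-neighbours are correct, and they make explicit what the paper leaves implicit.
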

\begin{proof}
If $d^+(v)=d_{\max}(T)$, then $d^+(u)\le d^+(v)$ for every vertex $u$, so $v$ is a
strong king by Lemma~\ref{lem:strongking-char} below.
\end{proof}

Our quantum algorithm for the top cycle begins by finding a Hamiltonian
path. Hell and Rosenfeld~\cite{HR83} show that this can be
done using $O(N\log N)$ comparisons, which are edge queries in our
model.

\begin{lemma}[\cite{Sor88,HR83}]\label{lem:hampath}
A Hamiltonian path of an $N$-vertex tournament can be found by a
deterministic algorithm with $O(N\log N)$ queries.
\end{lemma}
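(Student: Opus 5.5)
We build the path by binary insertion, the same scheme as binary insertion sort, with ``$u$ beats $v$'' playing the role of ``$u$ precedes $v$''. We keep a directed path $P=(p_1\to p_2\to\cdots\to p_k)$ on the first $k$ vertices processed, starting from a single vertex. To add a new vertex $x$ we find a position $i$ such that $p_i\to x\to p_{i+1}$, with the conventions that $x$ may be placed first if $x\to p_1$ and last if $p_k\to x$. Splicing $x$ in at that position gives a directed path on $k+1$ vertices. After all $N$ vertices have been inserted, $P$ is a Hamiltonian path of $T$.

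The key step is finding the insertion position with $O(\log k)$ queries by binary search. The search maintains the invariant that we hold indices $a<b$ with $p_a\to x$ and $x\to p_b$, using virtual sentinels $p_0$ and $p_{k+1}$ with $p_0\to x\to p_{k+1}$. First query the edges between $x$ and $p_1$ and between $x$ and $p_k$. If $x\to p_1$, put $x$ first; if $p_k\to x$, put $x$ last. Otherwise $a=1$ and $b=k$ satisfy the invariant with real vertices. While $b-a>1$, let $c=\lfloor (a+b)/2\rfloor$ and query the edge between $x$ and $p_c$. Set $a=c$ if $p_c\to x$, and set $b=c$ otherwise. Every orientation preserves the invariant, and this is the only point needing care: the comparison relation of a tournament is not transitive, yet the search still works, because it only needs the endpoints of the current interval to have opposite orientations relative to $x$. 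It never needs monotonicity along the path. When $b=a+1$ we have $p_a\to x\to p_{a+1}$, so $x$ can be inserted between them.

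Each insertion into a path of length $k$ uses at most $2+\lceil\log_2 k\rceil$ queries. The total is therefore $\sum_{k=1}^{N-1}(2+\lceil\log_2 k\rceil)=O(N\log N)$. The algorithm is deterministic, and its correctness follows from the invariant by induction on $k$. No step is a real obstacle. The one conceptual point, already handled above, is that binary search is valid without transitivity because it uses the two-endpoint invariant. The alternative via Rédei's theorem and merge sort would also work, but the insertion argument is more direct.
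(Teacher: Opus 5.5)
Your binary-insertion argument is correct and essentially matches the paper: the paper gives no proof of its own, but it cites Hell and Rosenfeld and describes the first step of its top-cycle algorithm as finding a Hamiltonian path by insertion sort. Your invariant $p_a\to x\to p_b$ is what makes the binary search valid without transitivity, and the bound of $2+\lceil\log_2 k\rceil$ queries per insertion gives $O(N\log N)$ overall.
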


 We also use a reachability description of the top cycle. The standard
characterization of $\mathrm{TC}(T)$ as the strongly connected
component that dominates all other strongly connected components
\cite[Sec.~3.3, pp.~71--72]{BBH16} gives the following lemma.
\begin{lemma}\label{lem:tc-reach}
Let $v\in\mathrm{TC}(T)$. Then:
\begin{enumerate}[label=(\alph*)]
  \item every in-neighbour of a vertex in $\mathrm{TC}(T)$ also belongs to
        $\mathrm{TC}(T)$;
  \item every vertex in $\mathrm{TC}(T)$ can reach $v$ by a directed path.
\end{enumerate}
Consequently, $\mathrm{TC}(T)=\{w\in V(T):w\text{ can reach }v\}$.
\end{lemma}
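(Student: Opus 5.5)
The plan is to argue directly from the definition of $\mathrm{TC}(T)$ as the unique smallest dominant set, using only the fact that dominant sets are nested. First I will check that any two dominant sets $D_1,D_2$ are comparable under inclusion. If $x\in D_1\setminus D_2$ and $y\in D_2\setminus D_1$, then $x\to y$ because $x\in D_1$ and $y\notin D_1$, while $y\to x$ because $y\in D_2$ and $x\notin D_2$. This is impossible in a tournament. Hence the dominant sets form a chain, and $\mathrm{TC}(T)$ is its minimum element.

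For (a), let $x\in\mathrm{TC}(T)$ and let $u\to x$. If $u\notin\mathrm{TC}(T)$, then dominance of $\mathrm{TC}(T)$ gives $x\to u$, contradicting $u\to x$. So $u\in\mathrm{TC}(T)$.

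For (b), fix $v\in\mathrm{TC}(T)$ and let $R=\{w:w\text{ can reach }v\}$, counting $v$ itself via the trivial path. Then $R$ is nonempty. I claim $R$ is dominant. Take $w\in R$ and $z\notin R$. If $z\to w$, then $z$ reaches $v$ through $w$, so $z\in R$, a contradiction; thus $w\to z$. By minimality, $\mathrm{TC}(T)\subseteq R$, which is exactly (b). For the reverse inclusion, I will show $R\subseteq\mathrm{TC}(T)$ by induction on the length of a path $w=w_0\to w_1\to\dots\to w_k=v$. The last vertex $w_k=v$ lies in $\mathrm{TC}(T)$. Each earlier $w_i$ is an in-neighbour of $w_{i+1}$, so it lies in $\mathrm{TC}(T)$ by (a). Together these give $\mathrm{TC}(T)=R$, the stated consequence.

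No step should be a real obstacle. The only points needing care are that the trivial path makes $v\in R$, so $R$ is nonempty and qualifies as dominant, and the direction of the arcs in the dominance argument for $R$. The chain observation is not actually needed beyond justifying that the ``unique smallest'' dominant set exists. Alternatively, one can cite the strongly-connected-component characterization from \cite{BBH16}, but the direct argument above is self-contained.
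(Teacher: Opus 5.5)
Your proof is correct, and it reaches (b) by a different route from the paper. The paper does not argue from scratch. It reads the lemma off the characterization in \cite{BBH16} of $\mathrm{TC}(T)$ as the strongly connected component that dominates all other components. Part (a) and the inclusion $\{w : w \text{ can reach } v\}\subseteq\mathrm{TC}(T)$ then follow because no arc enters $\mathrm{TC}(T)$ from outside, which is exactly your argument. Part (b) is immediate from the strong connectivity of $\mathrm{TC}(T)$.

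You instead obtain (b) from minimality alone. The set $R$ of vertices that can reach $v$ is closed under taking in-neighbours, so it is dominant, so it contains $\mathrm{TC}(T)$. This keeps your argument self-contained, using only the definition of a dominant set. Since $v\in\mathrm{TC}(T)$ was arbitrary, it also proves that $\mathrm{TC}(T)$ is strongly connected instead of assuming it.

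Your chain observation also supplies a proof of the fact that dominant sets are ordered by inclusion, which the paper later uses without proof in Theorem~\ref{thm:topcycle-lb}. The paper's route is shorter on the page, but it relies on the external structural fact that the top strongly connected component of a tournament is the smallest dominant set.
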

 
%\begin{proof}
%Let $C=\mathrm{TC}(T)$. Since $C$ dominates every vertex outside $C$, no vertex outside $C$ can have an edge to a vertex in $C$. This proves (a). The set $C$ is strongly connected, so every vertex in $C$ can reach $v$.Conversely, no vertex outside $C$ can reach $v$, since any such path would have to contain an edge from outside $C$ into $C$. No such edge exists. Thus, $\mathrm{TC}(T)=\{w\in V(T):w\text{ can reach }v\}$.
%\end{proof}
\paragraph{Quantum query model.}
We use the standard quantum query model. A tournament $T$ on the vertex set
$[N]$ is given by a string in $\{0,1\}^{\binom N2}$, with one bit for each pair
of distinct vertices that specifies the orientation of the edge between them.
It is accessed through the oracle
$O_T\colon|u,v,b\rangle\mapsto|u,v,b\oplus[u\to v]\rangle$, where
$[u\to v]\in\{0,1\}$ indicates whether $u$ beats $v$; a graph is accessed
analogously through its adjacency matrix, and a string $x\in\{0,1\}^n$
through $O_x\colon|i,b\rangle\mapsto|i,b\oplus x_i\rangle$. The cost of an
algorithm is the number of oracle applications, and the
\emph{bounded-error quantum query complexity} $Q(P)$ of a problem $P$ is the
smallest cost of an algorithm that, on every input, outputs a correct answer
with probability at least $2/3$. In our lower bounds we build a tournament
$T_x$ from an input $x$ such that every edge of $T_x$ is either fixed or
determined by a single bit of $x$. A query to $T_x$ can then be simulated with
at most one query to $x$, so an algorithm on $T_x$ that makes $q$ queries
gives an algorithm on $x$ that makes at most $q$ queries.

We recall several standard quantum query bounds that we use throughout the paper.
\begin{theorem}[\cite{BBCMW01}]\label{thm:or}
For every positive integer $n$, the bounded-error quantum query complexity of
$\textup{\textsc{OR}}_n$ is $\Theta(\sqrt n)$.
\end{theorem}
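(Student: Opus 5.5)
The statement to be proved is the standard bound $Q(\textsc{OR}_n)=\Theta(\sqrt n)$. I would treat the two directions separately: Grover search for the upper bound, and Ambainis's adversary method (or the hybrid/BBBV argument) for the lower bound.

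\textbf{Upper bound.} Convert $O_x$ into a phase oracle $|i\rangle\mapsto(-1)^{x_i}|i\rangle$ with one query, using phase kickback on the ancilla $|-\rangle$. Then run Grover's algorithm. If $|x|=t\ge1$ is known, about $\frac{\pi}{4}\sqrt{n/t}$ iterations rotate the uniform superposition close to the marked subspace. Measuring gives a marked index with constant probability, and this can be verified with one more query.

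Since $t$ is unknown, I would use the standard Boyer--Brassard--H{\o}yer--Tapp schedule. Choose a random iteration count from geometrically growing ranges $[0,\lambda^j)$ with $\lambda<4/3$. Each round verifies its output classically. The total expected cost is $O(\sqrt{n/t})\le O(\sqrt n)$. Truncating at $C\sqrt n$ queries and applying Markov's inequality gives error at most $1/3$. If no round finds a marked index, output $0$. This answer is always correct when $x=0^n$, since verification never accepts a false index.

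\textbf{Lower bound.} I would apply Ambainis's adversary theorem (stated later as Theorem~\ref{thm:ambainis}). Take $X=\{0^n\}$ and $Y=\{e_i: i\in[n]\}$, the weight-one strings, and let $R$ relate $0^n$ to every $e_i$. Then:
\begin{itemize}
\item each $x\in X$ has $M=n$ related $y$;
\item each $y\in Y$ has $M'=1$ related $x$;
\item for a fixed $x$ and index $i$, the number of related $y$ with $y_i\ne x_i$ is $L=1$;
\item symmetrically, $L'=1$.
\end{itemize}
This gives $Q(\textsc{OR}_n)=\Omega(\sqrt{MM'/(LL')})=\Omega(\sqrt n)$.

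If the adversary theorem is not yet available, I would instead run the hybrid argument directly. Let $|\psi_t\rangle$ be the state on input $0^n$ after $t$ queries, and let $q_i$ be the total query magnitude on index $i$ over a $T$-query run. Then $\sum_i q_i=T$. The distance between the final states on $0^n$ and on $e_i$ is at most $2\sqrt{T q_i}$. Distinguishing with bounded error needs this distance to be $\Omega(1)$ for every $i$. Summing over $i$ gives $n\cdot\Omega(1)\le 4T^2$, so $T=\Omega(\sqrt n)$.

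The main obstacle is the unknown-$t$ analysis in the upper bound, that is, getting the expected-cost bound for the randomized iteration schedule. Everything else is routine.
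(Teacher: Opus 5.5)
Your argument is correct in both directions, but it is not the route behind the statement. The paper gives no proof and imports the result from \cite{BBCMW01}, where the lower bound comes from the polynomial method.

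In that argument, the acceptance probability of a $T$-query algorithm is a polynomial of degree at most $2T$ in the input bits. Averaging over permutations of the input turns it into a univariate polynomial $P$ of degree at most $2T$ with $P(0)\le 1/3$, $P(w)\ge 2/3$ for $1\le w\le n$, and $0\le P\le 1$ on $\{0,\dots,n\}$. The approximate-degree bound of Nisan--Szegedy and Paturi, a consequence of Markov's inequality for polynomials, then forces $\deg P=\Omega(\sqrt n)$. The upper bound is Grover search, as in your proposal.

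Your two lower-bound arguments are both valid and more elementary. One is the adversary instantiation ($X=\{0^n\}$, $Y$ the weight-one strings, $M=n$, $M'=L=L'=1$). The other is the hybrid argument, which is the original proof of \cite{BBBV97}, also cited in the paper. The adversary version has a further advantage: it uses exactly the tool the paper relies on for all its other lower bounds (Theorem~\ref{thm:ambainis}). Proving the statement that way would make the paper's lower bounds self-contained modulo that one theorem.

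The polynomial method buys more generality in return. It gives tight bounds for all symmetric functions and for exact and zero-error computation; for instance, computing $\textsc{OR}_n$ with zero error needs $n$ queries. An adversary argument cannot certify that, because it only bounds bounded-error complexity.

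Finally, the unknown-$t$ analysis you flag as the main obstacle is already packaged in the paper as Theorem~\ref{thm:bbht}. Taking $\delta=1/3$ there gives the $O(\sqrt n)$ upper bound directly, with the one-sided error you describe.
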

 
\begin{theorem}[\cite{MPS23}]\label{thm:king}
A king of an $N$-vertex tournament can be found with bounded error using
$O(\sqrt N\,\mathrm{polylog}\,N)$ quantum queries.
\end{theorem}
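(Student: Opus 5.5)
We build a king by peeling off in-neighbourhoods, and implement the peeling quantumly by Grover search over $[N]$ with a membership test that is cheap at every level. The structural fact is this: if $v$ is any vertex, $A$ is the set of in-neighbours of $v$, and $w$ is a king of $T[A]$, then $w$ is a king of $T$. Indeed, $w$ reaches every vertex of $A$ within two steps inside $T[A]$. It beats $v$ directly, since $w\in A$. And for every out-neighbour $x$ of $v$ it has the two-path $w\to v\to x$. If $A=\emptyset$, then $v$ is a Condorcet winner and hence a king.

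Iterating gives a chain $v_1,v_2,\dots,v_t$ and nested sets $A_0=V\supseteq A_1\supseteq\dots\supseteq A_t=\emptyset$. Here $v_{k+1}\in A_k$, and $A_{k+1}=\{x\in A_k: x\to v_{k+1}\}=\{x: x\to v_1,\dots,x\to v_{k+1}\}$. The last vertex $v_t$ is a Condorcet winner of $T[A_{t-1}]$. Applying the structural fact $t-1$ times shows that $v_t$ is a king of $T$. Membership $x\in A_k$ can be tested with $k$ queries, without ever writing $A_k$ down.

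The quantitative plan is to make $|A_k|$ shrink geometrically, say $|A_{k+1}|\le \tfrac34|A_k|$, so that $t=O(\log N)$. Pick $v_{k+1}$ as a uniformly random element of $A_k$. Its in-degree inside $T[A_k]$ averages $(|A_k|-1)/2$, so by Markov's inequality it is at most $\tfrac34|A_k|$ with constant probability. A uniform element of $A_k$ is produced by Grover search over $[N]$ with the $k$-query membership oracle, at cost $O(k\sqrt{N/|A_k|})$. We certify that the shrinkage happened by amplitude estimation of $|A_{k+1}|/N$ to constant relative precision, at cost $O(k\sqrt{N/|A_{k+1}|})$ up to logarithmic factors; if it fails, we resample. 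Emptiness of $A_{k+1}$ is detected by a Grover search costing $O(k\sqrt N)$. Since $|A_k|\le (3/4)^kN$, the per-level cost is $O(k\,(4/3)^{k/2})$ times polylogarithmic factors. This is a geometric sum dominated by its last term, where $|A_k|=\Theta(1)$, giving $O(\sqrt N\log N)$. The emptiness tests add $O(\sqrt N\log^2 N)$ over the $O(\log N)$ levels.

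The main obstacle is error and cost control. There are $O(\log N)$ levels, each consisting of randomized retries and bounded-error subroutines, so every subroutine must be amplified to error $1/\polylog N$. This amplification, together with the expected number of retries per level (handled by capping the retries at $O(\log N)$ and taking a union bound), contributes the $\polylog N$ factor in Theorem~\ref{thm:king}. A second delicate point is that the amplitude estimates are only relative approximations of $|A_k|$. So I would run the Grover samplers with unknown marked fraction (the exponential-guessing variant), so that a constant-factor error in the estimate of $|A_k|$ costs only a constant factor in queries.
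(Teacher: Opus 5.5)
The paper gives no proof of this statement: it is quoted from \cite{MPS23} and used only as a black box in Theorem~\ref{thm:condorcet}. Your proposal is therefore a self-contained reconstruction, and its skeleton is sound. A king of the sub-tournament on the in-neighbours of $v$ is a king of $T$. Iterating this along $A_0\supseteq A_1\supseteq\cdots$ is correct. Grover search with the $O(k)$-query membership test (computed and uncomputed) returns a uniformly random element of $A_k$.

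One step is wrong as written. From $|A_k|\le(3/4)^kN$ you deduce that the level cost $k\sqrt{N/|A_k|}$ is $O(k(4/3)^{k/2})$. But this cost decreases in $|A_k|$, so an upper bound on $|A_k|$ only bounds the cost from below. The sets may collapse much faster than $(3/4)^k$, and then a single level costs $\Theta(k\sqrt N)$. The geometric sum must be taken from the end: certified shrinkage and $|A_{t-1}|\ge1$ give $|A_k|\ge(4/3)^{t-1-k}$, hence $\sum_k k\sqrt{N/|A_k|}\le t\sqrt N\sum_{j\ge0}(3/4)^{j/2}=O(\sqrt N\log N)$.

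For a $\polylog$ bound, however, none of this accounting is needed. By Theorem~\ref{thm:bbht}, one BBHT search over $[N]$ stopped after $O(\sqrt N\log(1/\delta))$ oracle calls, each costing $O(k)$ queries, either returns a uniform element of $A_k$ or certifies $A_k=\emptyset$, whatever $|A_k|$ is. So you only need $O(\log N)$ levels, and uncertified random pivots already give this with high probability. Given the history, each pivot shrinks the current set by a factor $3/4$ with probability at least $1/3$. By a standard Chernoff-type bound, $C\log N$ levels then contain more than $\log_{4/3}N$ shrinking steps, which forces the set to become empty. With $\delta=\Theta(1/\log N)$ and a union bound, this gives $O(\sqrt N\log^2N\log\log N)$ queries, with no amplitude estimation and no retries.

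It also removes a point you left open. An estimate with constant relative precision cannot test the exact threshold $\tfrac34|A_k|$, so your certification would have needed a slack threshold. For instance, accept when the estimated ratio is at most $0.8$, and apply Markov's inequality at $0.7$.
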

 
%\begin{theorem}[\cite{BBHT98}]\label{thm:counting}
%For every $X\in\{0,1\}^N$, there is a quantum algorithm that returns $t=|X|$
%with probability at least $3/4$ and uses expected time
%$\Theta\bigl(\sqrt{(t+1)(N-t+1)}\bigr)$.
%\end{theorem}

Finally, we use the following search guarantee when the number of marked inputs is unknown.
 
\begin{theorem}[\cite{BBHT98}, Theorem~3]\label{thm:bbht}
Let $f\colon[n]\to\{0,1\}$ and $t=|f^{-1}(1)|$, where $t$ is unknown. There is
a quantum algorithm that, if $t\ge1$, outputs a marked element using an
expected $O(\sqrt{n/t})$ queries to $f$. Moreover, for every $\delta>0$,
stopping it after $O(\sqrt n\log(1/\delta))$ queries gives an algorithm that
reports ``no marked element'' when $t=0$, and errs with probability at most
$\delta$ when $t\ge1$.
\end{theorem}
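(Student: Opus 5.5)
The plan is to reproduce the exponential-searching argument of Boyer, Brassard, Høyer and Tapp. It uses Grover's iteration $G$, which costs one query to $f$. Let $\theta\in(0,\pi/2]$ be given by $\sin^2\theta=t/n$. Starting from the uniform superposition, $j$ applications of $G$ followed by a measurement give a marked element with probability $\sin^2((2j+1)\theta)$.

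\textbf{Step 1 (averaging lemma).} First dispose of the case $t>3n/4$: sample one uniformly random $i$ and query $f(i)$, which succeeds with constant probability at constant cost. For the remaining case, let $m\ge1$ be an integer and choose $j$ uniformly from $\{0,\dots,m-1\}$. Summing the trigonometric series, the success probability is
\[
  P_m=\tfrac12-\frac{\sin(4m\theta)}{4m\sin(2\theta)}.
\]
In particular $P_m\ge 1/4$ whenever $m\ge m_0:=1/\sin(2\theta)$, and $m_0=O(\sqrt{n/t})$ when $0<t\le 3n/4$. I expect this to be the main technical step; it is a direct computation with $\sum_j\cos((4j+2)\theta)$.

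\textbf{Step 2 (the algorithm and its expected cost).} Set $m=1$ and $\lambda=6/5$. In each round, pick $j$ uniformly from $\{0,\dots,\lceil m\rceil-1\}$, apply $G^j$, measure some $i$, and query $f(i)$ to check it. If $f(i)=1$, output $i$. Otherwise set $m\leftarrow\min(\lambda m,\sqrt n)$ and repeat.

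Since every output is verified, the algorithm never outputs an unmarked element, so it is always correct when $t=0$. When $t\ge1$, split the rounds into two phases.
\begin{itemize}
\item \emph{Before $m$ reaches $m_0$.} The cost of these rounds is a geometric sum, which is $O(m_0)$.
\item \emph{After $m$ reaches $m_0$.} Each round succeeds with probability at least $1/4$ by Step 1. The $s$-th such round costs $O(\lambda^s m_0)$ and is reached with probability at most $(3/4)^s$. Since $\tfrac34\lambda<1$, the expected cost of this phase is also $O(m_0)$.
\end{itemize}
The cap $m\le\sqrt n$ does no harm, because $m_0\le\sqrt n$ up to a constant factor. This gives the expected bound $O(\sqrt{n/t})$.

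\textbf{Step 3 (truncation).} When $t\ge1$, the expected cost is at most $c\sqrt n$ for an absolute constant $c$. Run the algorithm with a hard budget of $2c\sqrt n$ queries. By Markov's inequality it finds a marked element within the budget with probability at least $1/2$. Repeat $\lceil\log_2(1/\delta)\rceil$ independent budgeted runs, and report ``no marked element'' only if all of them fail. This uses $O(\sqrt n\log(1/\delta))$ queries. It errs with probability at most $\delta$ when $t\ge1$, and it is never wrong when $t=0$, because outputs are verified.
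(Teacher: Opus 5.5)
Your proposal is correct and is essentially the exponential-search argument of Boyer, Brassard, H{\o}yer and Tapp, which the paper cites without giving a proof of its own: the same $\lambda=6/5$, the same averaging lemma with $m_0=1/\sin 2\theta$, and the same two-phase expected-cost bound, with the $\delta$-truncation obtained by a routine Markov-plus-repetition step. Two small points need tightening: since $t$ is unknown the algorithm cannot branch on $t>3n/4$, so either interleave a classical sample before every round or show that for $\theta\in(\pi/3,\pi/2)$ every round still succeeds with probability at least $1/4$ (if $m\sin 2\theta<1$ then $4m(\pi/2-\theta)<\pi$, which forces $\sin(4m\theta)\le 0$ and hence $P_m\ge 1/2$); and use the exact bound $m_0\le\sqrt{n/t}\le\sqrt n$, valid for $t\le 3n/4$, rather than ``up to a constant factor'', because the capped $m$ must actually reach $m_0$ for Step~1 to apply.
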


\subsection{The Heavy-Row problem and the adversary framework}
\label{sec:prelim-heavyrow}
 
To prove lower bounds for several problems studied in this paper, we use the following promise problem, which we call the \textsc{Heavy-Row} problem. It captures the single-row, unit-gap signal used in our reductions: a row of weight $k+1$ will correspond to a vertex whose (out-)degree is one larger than that of a vertex corresponding to a row of weight $k$.
 
\begin{definition}[\textsc{Heavy-Row}]\label{def:heavyrow}
Fix an odd integer $m\ge3$ and let $k=\frac{m-1}{2}$. An input is an
$m\times m$ Boolean matrix $x=(x_{ij})\in\{0,1\}^{m\times m}$, so there are
$n=m^2$ Boolean variables. For each row $i$, let $x_i=(x_{i1},\dots,x_{im})$
and let $|x_i|$ denote its Hamming weight. Let
\begin{align*}
  R_0&=\{x:|x_i|=k\text{ for every }i\},\\
  R_1&=\{x:\text{there is a unique }i^\star\text{ with }|x_{i^\star}|=k+1,
        \text{ and }|x_i|=k\text{ for every }i\neq i^\star\}.
\end{align*}
Inputs in $R_0$ are no-instances and inputs in $R_1$ are yes-instances; in a
yes-instance, $i^\star$ is called the \emph{heavy row}. Given
$x\in R_0\cup R_1$, the task is to decide which case holds.
\end{definition}
 
\begin{example}\label{ex:heavyrow}
For $m=3$ we have $k=1$. The no-instance
\[
  x=\begin{pmatrix}1&0&0\\0&1&0\\0&0&1\end{pmatrix}
\]
has $6$ zeros, and changing any one of them into a $1$ gives a yes-instance,
so $M=6$. Flipping the top-middle entry gives the yes-instance
\[
  y=\begin{pmatrix}1&1&0\\0&1&0\\0&0&1\end{pmatrix},
\]
from which the only way back into $R_0$ is to clear one of the two $1$s of the
top row, so $M'=2$. The bound of Theorem~\ref{thm:ambainis} is
$\sqrt{6\cdot2}=\sqrt{12}=2\sqrt3\approx3.46$.
\end{example}
 
\paragraph{Relation to prior work.}
\textsc{Heavy-Row} is a restricted version of the composed function
$\textsc{OR}_m\circ\textsc{PrMaj}$ used by Gilani, Mande, Patro and
Raja~\cite{GMPR26} in their lower bounds for maximum degree. There,
$\textsc{PrMaj}$ (promise majority) distinguishes strings of even length
$\ell$ and Hamming weight $\ell/2$ from strings of weight $\ell/2+1$, and one
asks whether at least one of $m$ such strings is heavy. \textsc{Heavy-Row}
differs only in that the row length is odd, so that the two weights are $k$
and $k+1$, and in the additional promise that at most one row is heavy.
Gilani et al.\ obtain an $\Omega(m^{3/2})$ lower bound for
$\textsc{OR}_m\circ\textsc{PrMaj}_m$ by combining the composition theorem of
Lee et al.~\cite{LMRSS11} with the bounds
$Q(\textsc{OR}_m)=\Omega(\sqrt m)$~\cite{BBBV97} and
$Q(\textsc{PrMaj}_m)=\Omega(m)$~\cite{Amb02}. The composition theorem as
stated there requires the outer function to be total, so it does not
immediately give a bound under the promise that at most one row is heavy. In
Section~\ref{sec:heavyrow-lb} we instead give a short, self-contained proof
for \textsc{Heavy-Row} that uses only Ambainis's adversary theorem. Since
appending a fixed $1$ to every row maps an instance of \textsc{Heavy-Row} to
an instance of $\textsc{OR}_m\circ\textsc{PrMaj}_{m+1}$ with the same answer,
our bound also yields an $\Omega(m^{3/2})$ lower bound for
$\textsc{OR}_m\circ\textsc{PrMaj}_{m+1}$.
 
We will show that the quantum query complexity of \textsc{Heavy-Row} is
$\Omega(m^{3/2})$ by applying Ambainis's general lower-bound
theorem~\cite{Amb02}.
 
\begin{theorem}[Ambainis~\cite{Amb02}]\label{thm:ambainis}
Let $f(x_1,\dots,x_n)$ be a function of $n$ Boolean variables, and let $X,Y$
be sets of inputs such that $f(x)\neq f(y)$ whenever $x\in X$ and $y\in Y$. Let
$R\subseteq X\times Y$ be such that:
\begin{enumerate}[label=(\arabic*)]
  \item for every $x\in X$ there are at least $M$ different $y\in Y$ with
        $(x,y)\in R$;
  \item for every $y\in Y$ there are at least $M'$ different $x\in X$ with
        $(x,y)\in R$;
  \item for every $x\in X$ and $i\in[n]$ there are at most $L$ different
        $y\in Y$ with $(x,y)\in R$ and $x_i\neq y_i$;
  \item for every $y\in Y$ and $i\in[n]$ there are at most $L'$ different
        $x\in X$ with $(x,y)\in R$ and $x_i\neq y_i$.
\end{enumerate}
Then every bounded-error quantum algorithm computing $f$ uses
$\Omega\bigl(\sqrt{MM'/(LL')}\bigr)$ queries.
\end{theorem}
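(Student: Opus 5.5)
This is Ambainis's weighted progress argument; since the statement is quoted from \cite{Amb02}, I only outline it. Fix a $T$-query algorithm with error at most $\varepsilon=1/3$. Let $|\psi_x^t\rangle$ be its state on input $x$ after $t$ queries. The progress measure is
\[
  W^t=\sum_{(x,y)\in R}\bigl|\langle\psi_x^t|\psi_y^t\rangle\bigr|.
\]
The plan is to bound $W$ at the start and at the end, and then bound how much a single query can change it.

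\emph{Endpoints.} Before any query the state does not depend on the input, so $W^0=|R|$. At the end, the algorithm outputs $f(x)\neq f(y)$ correctly with probability at least $2/3$ on both inputs. The standard distinguishability bound then gives $|\langle\psi_x^T|\psi_y^T\rangle|\le 2\sqrt{\varepsilon(1-\varepsilon)}<1$, so $W^T\le c_0|R|$ for a constant $c_0<1$. Unitaries between queries do not depend on the input and leave every inner product unchanged. Hence it suffices to show that one oracle call decreases $W$ by at most $2|R|\sqrt{LL'/(MM')}$. This yields $T\ge\frac{1-c_0}{2}\sqrt{MM'/(LL')}$.

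\emph{One query.} Write $|\psi_x\rangle=\sum_i\alpha_{x,i}|i\rangle|\phi_{x,i}\rangle$, grouping by the queried index $i$, and apply the oracle. Components with $x_i=y_i$ are mapped identically on both inputs. The remaining components change the inner product by at most $2\sum_{i:x_i\ne y_i}|\alpha_{x,i}||\alpha_{y,i}|$. For any $c>0$ we use $2ab\le c a^2+b^2/c$ and sum over $R$:
\[
  \sum_{(x,y)\in R}\sum_{i:x_i\neq y_i}2|\alpha_{x,i}||\alpha_{y,i}|
  \le c\sum_{x\in X}\sum_i|\alpha_{x,i}|^2\,L+\frac1c\sum_{y\in Y}\sum_i|\alpha_{y,i}|^2\,L'
  = cL|X|+\frac{L'}{c}|Y|.
\]
The first term uses condition (3): for fixed $x$ and $i$, at most $L$ partners $y$ differ from $x$ at $i$. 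The second term uses condition (4) symmetrically, and normalization gives $\sum_i|\alpha_{x,i}|^2=1$. Choosing $c=\sqrt{L'|Y|/(L|X|)}$ bounds the sum by $2\sqrt{LL'|X||Y|}$.

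\emph{Normalizing by $|R|$.} Conditions (1) and (2) give $|R|\ge M|X|$ and $|R|\ge M'|Y|$. Hence $|X||Y|\le |R|^2/(MM')$, and the per-query decrease is at most $2|R|\sqrt{LL'/(MM')}$, as required.

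The only delicate point is the bookkeeping in the one-query step. One must justify that $x$ and $y$ differ only on components whose index satisfies $x_i\ne y_i$, and handle the absolute values in $W$ (via the triangle inequality). One must also choose the AM--GM weight $c$ so that the two sides, controlled by $L$ and $L'$ respectively, balance. Everything else is routine.
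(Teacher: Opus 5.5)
The paper gives no proof of this statement; it quotes it from Ambainis~\cite{Amb02}, and your outline is the standard progress-measure proof from that paper: $W^0=|R|$, $W^T\le 2\sqrt{\varepsilon(1-\varepsilon)}\,|R|$, and a per-query loss of at most $2|R|\sqrt{LL'/(MM')}$, obtained from the weighted AM--GM split controlled by conditions (3) and (4) together with $|R|\ge M|X|$ and $|R|\ge M'|Y|$ from (1) and (2). The argument is correct as written. It uses only the algorithm's behaviour on $X\cup Y$, so it also covers the promise problems (\textsc{Heavy-Row}, \textsc{Heavy-Rows}, \textsc{Tight-Count}) to which the paper applies the theorem.
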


\section{A Query Lower Bound for the Heavy-Row Problem} \label{sec:heavyrow-lb}

In this section, we prove a lower bound on the quantum query complexity of \HR~by applying Ambainis's theorem (Thm.~\ref{thm:ambainis}).

\begin{theorem}\label{thm:heavyrow}
$Q(\HR) = \Omega(m^{3/2})$.
\end{theorem}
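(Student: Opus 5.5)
We apply Ambainis's theorem (Theorem~\ref{thm:ambainis}) with $X=R_0$, $Y=R_1$, and the relation suggested by Example~\ref{ex:heavyrow}: $(x,y)\in R$ if and only if $y$ is obtained from $x$ by flipping a single $0$ entry into a $1$. Every such flip makes the affected row heavy (weight $k+1$) and leaves all other rows at weight $k$, so $y\in R_1$. Conversely, from $y\in R_1$ with heavy row $i^\star$, clearing any one of the $k+1$ ones of row $i^\star$ gives an element of $R_0$. Since related pairs differ in exactly one coordinate, $L=L'=1$. The resulting bound is only $\sqrt{MM'}$, with $M=m(m-k)=\Theta(m^2)$ and $M'=k+1=\Theta(m)$. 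This gives $\Omega(m^{3/2})$.

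In more detail:
\begin{itemize}
\item \emph{Condition (1).} Each $x\in R_0$ has exactly $m$ rows with $m-k=k+1$ zeros each. Hence $x$ has $M=m(k+1)=m(m+1)/2$ neighbours in $R$.
\item \emph{Condition (2).} Each $y\in R_1$ has exactly $M'=k+1=(m+1)/2$ neighbours: one for each $1$ in the heavy row. Clearing a $1$ in any other row would drop that row to weight $k-1$, which leaves $R_0$.
\item \emph{Condition (3).} Fix $x\in R_0$ and a coordinate $(i,j)$. At most one $y$ with $(x,y)\in R$ differs from $x$ at $(i,j)$, namely the flip of that entry if $x_{ij}=0$. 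So $L=1$.
\item \emph{Condition (4).} Symmetrically, for fixed $y$ and $(i,j)$ the only candidate $x$ is obtained by clearing $y_{ij}$. So $L'=1$.
\end{itemize}
Theorem~\ref{thm:ambainis} then gives
\[
Q(\HR)=\Omega\Bigl(\sqrt{\tfrac{m(m+1)}{2}\cdot\tfrac{m+1}{2}}\Bigr)=\Omega(m^{3/2}).
\]

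The only points that need care are already settled above. First, $R_1$ as defined requires a \emph{unique} heavy row, so a single flip from $R_0$ always lands in $R_1$; this is what makes condition (1) hold. Second, in condition (2) the neighbours of $y$ are exactly the $k+1$ entries of the heavy row, because clearing a $1$ elsewhere leaves $R_0$. Both sets $R_0$ and $R_1$ are nonempty for odd $m\ge3$; for instance, circulant patterns give rows of weight $k$. So I do not expect a real obstacle; the $m^{3/2}$ comes from balancing the $\Theta(m^2)$ search choices against the $\Theta(m)$ counting choices within one row.
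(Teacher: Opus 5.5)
Your proposal is correct and follows the paper's proof essentially verbatim: the same instantiation of Ambainis's theorem with $X=R_0$, $Y=R_1$ and single $0\to1$ flips, giving $M=m(m-k)$, $M'=k+1$ and $L=L'=1$, hence $\sqrt{MM'}=\Omega(m^{3/2})$.
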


\begin{proof}
Apply Theorem~\ref{thm:ambainis} with $X$ the set of no-instances, $Y$ the set of yes-instances, and $
  R = \{(x,y)\in X\times Y : y \text{ is obtained from } x \text{ by changing a single } 0 \text{ into a } 1\}.$

\emph{Parameter $M$.} Let $x\in X$. Each row of $x$ has $m-k$ zeros, so $x$ has $m(m-k)$ zeros in total. Changing any one of them into a $1$ raises that row to weight $k+1$ and leaves the other rows at weight $k$, which gives a yes-instance. Hence $M=m(m-k)$.

\emph{Parameter $M'$.} Let $y\in Y$ with heavy row $i^\star$. A related $x\in X$ is obtained by changing one $1$ of $y$ into a $0$. If the changed $1$ lies in row $i^\star$, every row of the result has weight $k$, so the result lies in $X$; there are $k+1$ such choices. If it lies in another row, that row drops to weight $k-1$ and the result is not a valid input. Hence $M'=k+1$.

\emph{Parameters $L$ and $L'$.} Related inputs differ in exactly one position. So for fixed $x$ and a fixed position $(i,j)$, the only related $y$ that can differ from $x$ at $(i,j)$ is the one obtained by flipping $(i,j)$ itself. Hence $L=1$, and by the same argument $L'=1$. Since $m-k=k+1=\frac{m+1}{2}$, Theorem~\ref{thm:ambainis} gives $Q(\HR) = \Omega(m^{3/2}).$
\end{proof}

\section{Maximum degree} \label{sec:maxdeg}
In this section, we study two maximum-degree problems. In
\textsc{MaxDeg-Vertex}, given an $N$-vertex undirected graph, the task is to output a vertex of maximum degree. Its quantum query complexity is
$\Theta(N^{3/2})$~\cite{GMPR26}; we give a self-contained proof, which also extends to directed graphs. We also study \textsc{MaxOutDeg-Tournament}: given an $N$-vertex tournament, the task is to output a vertex of maximum out-degree. We prove the same $\Theta(N^{3/2})$ bound for this problem.

\subsection{Maximum degree in general graphs}

 To obtain the quantum upper bound, computing the degree of each vertex classically requires $N-1$ queries, and finding the maximum among the $N$ degrees takes $O(\sqrt{N})$ queries \cite{DH96}. Thus, the total query complexity is $O(N^{3/2})$. For the lower bound we prove the following proposition.
\begin{proposition}\label{prop:general-graphs}
Finding a vertex of maximum degree in an $N$-vertex undirected graph requires $\Omega(N^{3/2})$.
\end{proposition}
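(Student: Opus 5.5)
We reduce from \HR{} (Theorem~\ref{thm:heavyrow}). Given an $m\times m$ instance $x$ with $m$ odd and $k=\frac{m-1}{2}$, we build a bipartite graph $G_x$. There are $m$ row vertices $r_1,\dots,r_m$ and $m$ column vertices $c_1,\dots,c_m$, and $r_i$ is joined to $c_j$ iff $x_{ij}=1$. Then $\deg(r_i)=|x_i|\in\{k,k+1\}$. Every edge slot of $G_x$ is either fixed or equal to a single bit of $x$, so one query to $G_x$ costs at most one query to $x$.

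The difficulty is that the column degrees are uncontrolled: a column may have degree up to $m$, which would swamp the row signal. We therefore boost the row degrees by a fixed amount that no column can reach. We add a set $A$ of $m$ auxiliary vertices and join every $r_i$ to every vertex of $A$ by fixed edges. Now $\deg(r_i)=m+|x_i|\ge m+k$. Each column satisfies $\deg(c_j)\le m$, and each auxiliary vertex has degree exactly $m$. Since $k\ge1$, every row vertex strictly beats every other vertex, and the graph has $N=3m$ vertices.

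The maximum-degree vertex now decodes the answer.
\begin{itemize}
\item If $x\in R_1$ with heavy row $i^\star$, the unique maximum-degree vertex is $r_{i^\star}$.
\item If $x\in R_0$, every row vertex ties at the maximum $m+k$.
\end{itemize}
A max-degree algorithm always returns some $r_i$ in both cases, so its output alone does not decide \HR. We add a verification step: given the returned $r_i$, we compute $|x_i|$ with $m$ further classical queries and answer ``yes'' iff $|x_i|=k+1$.
\begin{itemize}
\item On a yes-instance the algorithm returns $r_{i^\star}$ with probability at least $2/3$, and we correctly say yes.
\item On a no-instance every row has weight $k$, so we always correctly say no.
\end{itemize}
The total cost is $Q(\textsc{MaxDeg-Vertex}_{3m})+m$ queries. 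Combined with Theorem~\ref{thm:heavyrow}, this gives $Q(\textsc{MaxDeg-Vertex}_{3m})\ge\Omega(m^{3/2})-m=\Omega(m^{3/2})$, hence $\Omega(N^{3/2})$. For $N$ not of the form $3m$ with $m$ odd, we pad with isolated vertices, which have degree $0$ and do not affect the maximum.

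The main point to check is the degree-separation step: every non-row vertex must have degree strictly below $m+k$ regardless of $x$. This holds because columns and auxiliary vertices each have degree at most $m$ and $k\ge1$. The same construction works for directed graphs with out-degree if we orient the edges from rows to columns and from rows to $A$.
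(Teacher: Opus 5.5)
Your proof is correct and uses exactly the paper's construction: the $3m$-vertex graph with row vertices joined to column vertices according to $x$, plus $m$ auxiliary vertices joined to every row vertex, so that row degrees are $m+|x_i|$ while every other degree is at most $m$. Your explicit verification step (spending $m$ extra queries to read the returned row) is a genuine improvement. The paper's proof of this proposition only asserts that finding a maximum-degree vertex distinguishes the two cases, and the $q+m$ argument needed to justify this appears only later, in the proof of Theorem~\ref{thm:strongkings}.
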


\begin{proof}

We will reduce this problem from the \prob{Heavy-Row} problem. Let $x=(x_{ij})_{m \times m}$ be an instance of the \prob{Heavy-Row} problem.

\noindent \emph{Construction of the undirected graph:} Construct an undirected graph $G_x$ with $N=3m$ vertices. Let $U$, $W$, and $P$ be a partition of its vertex set where $U=\{u_0,\dots,u_{m-1}\}$, $W=\{w_0,\dots,w_{m-1}\}$, and $P=\{p_0,\dots,p_{m-1}\}$.

The edge set of $G_x$ is defined as follows:
\begin{itemize}
    \item $\{u_i,w_j\}$ if and only if  $x_{ij}=1$
    \item $\{u_i,p_l\}$ for all $i,l$
\end{itemize}

Each query to $G_x$ costs at most one query to $x$. If every row of $x$ has weight $k$ or $k+1$, then
 $d(u_i)=m+|x_i|\in\{m+k,\,m+k+1\}$, $d(w_j)=|\{i:x_{ij}=1\}|\le m$, and $d(p_l)=m$.

Since $k\ge1$, the vertices of maximum degree are all of $u_0,\dots,u_{m-1}$ if $x\in\mathcal R_0$, and only $u_{i^\star}$ if $x\in\mathcal R_1$ with heavy row $i^\star$.

Therefore, finding a vertex of maximum degree in the constructed graph distinguishes between the two cases of the Heavy-Row problem. Since the Heavy-Row problem requires $\Omega(m^{3/2})$ queries, finding a maximum-degree vertex also requires $\Omega(m^{3/2})$ queries. As $N=3m$, this gives a lower bound of $\Omega(N^{3/2})$. For values of $N$ not divisible by three, we can add isolated vertices without affecting the argument.
\end{proof}

Combining this lower bound with the $O(N^{3/2})$ upper bound gives a query complexity of $\Theta(N^{3/2})$ for finding a maximum-degree vertex in both undirected and directed graphs.
\begin{theorem}\label{thm:maxdegree-general}
    $Q($\prob{MaxDeg-Vertex}$)=\Theta(N^{3/2})$ in both undirected and directed graphs.
\end{theorem}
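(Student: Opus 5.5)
The theorem has two halves, and I would prove them separately: an $O(N^{3/2})$ algorithm and an $\Omega(N^{3/2})$ lower bound, each for both undirected and directed graphs.

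\emph{Upper bound.} The algorithm is the one described before Proposition~\ref{prop:general-graphs}. For a vertex $v$, its degree $d(v)$ is computed exactly by classically querying the $N-1$ adjacency-matrix entries in its row. For directed graphs we use the out-degree or the in-degree, according to which notion of degree is intended; either is one row or column of the adjacency matrix. This gives a subroutine with $O(N)$ queries that evaluates $d(v)$ reversibly. Running the Dürr--Høyer maximum-finding algorithm~\cite{DH96} over the $N$ values $d(v)$ uses $O(\sqrt N)$ calls to this subroutine. The total is $O(N^{3/2})$ queries with bounded error. No amplification is needed inside the subroutine, since it is deterministic.

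\emph{Lower bound, undirected case.} This is exactly Proposition~\ref{prop:general-graphs}, which reduces \HR{} to \textsc{MaxDeg-Vertex} via the graph $G_x$ on $N=3m$ vertices. The output vertex is checked by one extra step: if $x\in R_1$ the unique maximiser is $u_{i^\star}$, while if $x\in R_0$ every $u_i$ attains the maximum. To turn the output into a decision I would classically count the degree of the returned vertex $u_i$ with $O(m)$ queries and answer yes iff $|x_i|=k+1$. This adds only $O(m)$ queries to the simulation, so Theorem~\ref{thm:heavyrow} gives $\Omega(m^{3/2})=\Omega(N^{3/2})$. Padding with isolated vertices handles $N$ not divisible by $3$.

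\emph{Lower bound, directed case.} I would orient the edges of $G_x$ as $u_i\to w_j$ iff $x_{ij}=1$, and $u_i\to p_l$ for all $i,l$, with no other arcs. Then the out-degrees are $d^+(u_i)=m+|x_i|$ and $d^+(w_j)=d^+(p_l)=0$. Each adjacency query still costs at most one query to $x$, so the same argument goes through verbatim. If total degree (in plus out) is intended instead, the degrees coincide with the undirected ones and the argument is again identical.

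The only point needing care, and the main obstacle, is that the maximiser must actually decide \HR{}. In the no-case, $u_i$ beats the $w_j$ and $p_l$ because $d(w_j)\le m<m+k$, using $k\ge1$. Deciding from the returned vertex is cheap because one degree computation costs only $O(m)$, which is dominated by $m^{3/2}$.
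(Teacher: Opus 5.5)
Your proposal is correct and follows essentially the same route as the paper: the upper bound is D\"urr--H\o{}yer maximum finding over the $N$ degrees, each computed with $O(N)$ classical queries, and the lower bound is the reduction from \textsc{Heavy-Row} through the graph $G_x$ on $3m$ vertices from Proposition~\ref{prop:general-graphs}. You also fill in details the paper only asserts: the explicit $O(m)$-query check of the returned row, which the paper leaves implicit here but uses in the strong-king proof, and the explicit orientation $u_i\to w_j$, $u_i\to p_l$ for the directed case, with the in-degree variant following by reversing all arcs.
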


\subsection{Maximum Out Degree in Tournament}

The quantum query complexity of finding the maximum out-degree vertex in a tournament is the same as in general graphs, that is, $\Theta(N^{3/2})$. The upper bound follows from the same argument as in the general case.
\begin{proposition}\label{prop:maxdeg-lower}
Finding a vertex of maximum out-degree in an $N$-vertex tournament requires $\Omega(N^{3/2})$ quantum queries.
\end{proposition}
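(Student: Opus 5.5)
We reduce from \textsc{Heavy-Row} (Theorem~\ref{thm:heavyrow}), as in Proposition~\ref{prop:general-graphs}. The main change is that a tournament must orient every pair, so we fix the orientations of all pairs not governed by $x$, so that only the vertices coming from rows can attain the maximum out-degree.

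\emph{Construction.} Given $x\in\{0,1\}^{m\times m}$ with $m$ odd and $k=\frac{m-1}{2}$, build a tournament $T_x$ on $N=3m$ vertices partitioned into $U=\{u_0,\dots,u_{m-1}\}$, $W=\{w_0,\dots,w_{m-1}\}$ and $P=\{p_0,\dots,p_{m-1}\}$. The edges are:
\begin{itemize}
  \item inside each of $U$, $W$, $P$: the circulant tournament on $m$ vertices, so every vertex gets exactly $k$ out-neighbours inside its own part;
  \item between $U$ and $W$: $u_i\to w_j$ if $x_{ij}=1$, and $w_j\to u_i$ otherwise;
  \item every $u_i$ beats every $p_l$;
  \item every $p_l$ beats every $w_j$.
\end{itemize}
Each edge is fixed or determined by a single bit of $x$, so a query to $T_x$ costs at most one query to $x$.

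\emph{Degree count.} The out-degrees are
\[
d^+(u_i)=k+m+|x_i|,\qquad d^+(p_l)=k+m,\qquad d^+(w_j)=k+\bigl(m-|\{i:x_{ij}=1\}|\bigr)\le k+m.
\]
Under the promise, $|x_i|\in\{k,k+1\}$ with $k\ge1$, so every $u_i$ has out-degree at least $k+m+1$. This strictly exceeds every out-degree in $W\cup P$. Hence the maximum out-degree vertices are:
\begin{itemize}
  \item all of $U$ when $x\in R_0$;
  \item exactly $u_{i^\star}$ when $x\in R_1$ with heavy row $i^\star$.
\end{itemize}

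\emph{Reduction.} Suppose an algorithm finds a maximum out-degree vertex with $q$ queries. Run it on $T_x$ to obtain some $u_i$ (with probability at least $2/3$). Then classically query the $m$ entries of row $i$ and answer ``yes'' iff $|x_i|=k+1$. If $x\in R_1$, the returned vertex is $u_{i^\star}$, whose row is heavy. If $x\in R_0$, no row is heavy, so the answer is ``no''. This decides \textsc{Heavy-Row} with $q+m$ queries. By Theorem~\ref{thm:heavyrow}, $q+m=\Omega(m^{3/2})$, so $q=\Omega(m^{3/2})=\Omega(N^{3/2})$.

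\emph{Other values of $N$.} When $N$ is not of the form $3m$ with $m$ odd, add at most $5$ extra vertices that lose to every vertex of $T_x$ and are oriented arbitrarily among themselves. Every original vertex gains the same amount of out-degree, so the original ordering of out-degrees is preserved. The added vertices have out-degree at most $4$, which is below the maximum once $m$ is moderately large.

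\emph{Main obstacle.} The only delicate step is the degree bookkeeping: the fixed orientations among $U$, $W$ and $P$ must leave the row vertices strictly above all others while preserving the unit gap between weight-$k$ and weight-$(k+1)$ rows. Making $P$ beat $W$ while losing to $U$ is what achieves this, and the rest is routine.
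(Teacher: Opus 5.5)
Your proof is correct and is essentially the paper's reduction from \textsc{Heavy-Row}: row vertices beat a filler set, the filler beats the column vertices, each part carries a circulant tournament, and extra vertices that lose to everything pad $N$. The differences are that you use a single vertex $u_i$ per row with $N=3m$, where the paper uses triples $A_i$ with $N=10m+1$, and that you spell out the $m$-query check of the returned row, which the paper leaves implicit here and writes out only in the strong-king proof. Your leaner construction is enough for this proposition, but the paper needs the triples because it reuses the same tournament for strong kings: in your $T_x$, a non-heavy $u_i$ with $u_i\to u_{i^\star}$ has every in-neighbour of out-degree at most $2m-1=d^+(u_i)$, so it would still be a strong king.
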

\begin{proof}
We will reduce this problem from the \prob{Heavy-Row} problem. Let $x=(x_{ij})_{m \times m}$ be an instance of the \prob{Heavy-Row} problem.

\noindent \emph{Construction of the Tournament ($T_x$):}
Let $N=10m+1$, where $m\ge3$ is odd. Let $A, B,$ and $C$ be the partition of its vertex set with $|A|=3m$, $|B|=m$, and $|C|=6m+1$.  Since $m$ is odd, we can construct a circulant tournament within each set.

Furthermore, partition $A$ into the triples $A_i = \{a_i, a_{i+m}, a_{i+2m}\},$ for $0\leq i <m$, where $A_i$ consists of the vertices $a_s$ such that $s \equiv i (\bmod m)$.

Add the arcs between the sets as shown in Figure \ref{fig:groups}:

\begin{figure}[ht]
\centering
\begin{tikzpicture}[
  box/.style={draw, rounded corners, minimum width=3.3cm, minimum height=1.35cm, align=center},
  >=Stealth]
  \node[box] (A) at (0,0) {Set $A$\\[2pt] $3m$ vertices};
  \node[box] (C) at (4.6,0) {Set $C$\\[2pt] $6m+1$ vertices};
  \node[box] (B) at (9.2,0) {Set $B$\\[2pt] $m$ vertices};
  \draw[->, thick] (A) -- node[above, font=\small]{all} (C);
  \draw[->, thick] (C) -- node[above, font=\small]{all} (B);
  \draw[dashed, thick] (A.north) to[bend left=22]
    node[above, font=\small]{for $a\in A_i$:\ $a\to b_j \iff x_{ij}=1$} (B.north);
\end{tikzpicture}
\caption{The three vertex sets of \(T_x\). Solid arrows indicate that every vertex in one set dominates every vertex in the other set in the indicated direction. The dashed arrow represents the only input-dependent edges.}
\label{fig:groups}
\end{figure}

\noindent \textbf{Out-degrees:} For $a\in A_i$:
\[
  d^+(a) = \underbrace{\tfrac{3m-1}{2}}_{\text{inside } A} + \underbrace{|x_i|}_{\text{vs.\ } B} + \underbrace{(6m+1)}_{\text{vs.\ } C}
  = \begin{cases}
      8m, & |x_i| = k,\\
      8m+1, & |x_i| = k+1,
    \end{cases}
\]

For $b_j\in B$: $b_j$ beats exactly the three vertices of each triple $A_i$ with $x_{ij}=0$, so
\[
  d^+(b_j) = \tfrac{m-1}{2} + 3\,\bigl|\{i : x_{ij}=0\}\bigr| \le \tfrac{m-1}{2} + 3m = \tfrac{7m-1}{2} < 4m.
\]

For $c\in C$: $d^+(c) = 3m + m = 4m$.

Thus, if $x\in\mathcal R_0$, then $\dmax^+(T_x)=8m$, attained by every vertex of $A$.
If $x\in\mathcal R_1$ with heavy row $i^\star$, then $\dmax^+(T_x)=8m+1$, and the vertices of maximum out-degree are exactly the three vertices of $A_{i^\star}$.

Therefore, finding a vertex of maximum out-degree in the constructed tournament allows us to solve the Heavy-Row problem, as in the general-graph case. Since the Heavy-Row problem requires $\Omega(m^{3/2})$ queries, finding a maximum out-degree vertex also requires $\Omega(m^{3/2})$ queries. As $N=10m+1$, this gives a lower bound of $\Omega(N^{3/2})$.

For values of $N$ that are not of the form $10m+1$ with $m$ odd. Let $N\ge31$, let $m$ be the largest odd integer with $10m+1\le N$, and let $n_0=N-(10m+1)$. Then $m\ge3$, and since $10(m+2)+1>N$, we have $n_0\le19$, so $m=\Theta(N)$. Let $T'_x$ be the tournament obtained from $T_x$ by adding $n_0$ new vertices, ordered transitively among themselves, each of which loses to every vertex of $T_x$.
Every vertex of $T_x$ gains exactly $n_0$ out-neighbours, so the order of the out-degrees within $T_x$ is unchanged, and every vertex of $T_x$ has out-degree at least $n_0$ in $T'_x$. Every new vertex beats only new vertices, so its out-degree is at most $n_0-1$. Hence $T'_x$ and $T_x$ have the same vertices of maximum out-degree. The new edges do not depend on $x$, so the argument above applies to $T'_x$ without change, and finding a vertex of maximum out-degree in an $N$-vertex tournament requires $\Omega(m^{3/2})=\Omega(N^{3/2})$ queries for every $N\ge31$. Since the bound is asymptotic, this completes the proof.
\end{proof}

\noindent Combining the upper and lower bounds, we obtain the following theorem.

\begin{theorem}\label{thm:max-outdegreetournament}
    $Q($\prob{MaxOutDeg-Tournament}$)=\Theta(N^{3/2})$.
\end{theorem}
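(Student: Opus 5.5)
The theorem has two halves, an upper bound and a lower bound, and I will state them separately.

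\emph{Upper bound.} I follow the argument given for general graphs. Define a function on vertices $v\in[N]$ by $f(v)=d^+(v)$. Each value $f(v)$ can be computed exactly with $N-1$ classical edge queries, namely the queries of $v$ against every other vertex. This computation can be run reversibly inside a quantum subroutine at cost $O(N)$ queries per evaluation, with garbage uncomputed at a further $O(N)$ queries.

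On top of this subroutine I run the D\"urr--H\o yer maximum-finding algorithm~\cite{DH96}. It makes $O(\sqrt N)$ calls to the evaluation of $f$ and outputs an index of maximum value with bounded error. The total cost is $O(\sqrt N\cdot N)=O(N^{3/2})$ queries. The output is a vertex of maximum out-degree, which is exactly what \textsc{MaxOutDeg-Tournament} asks for.

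\emph{Lower bound.} Proposition~\ref{prop:maxdeg-lower} already gives $\Omega(N^{3/2})$ for every $N\ge 31$. Small $N$ are irrelevant asymptotically.

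The key point in that reduction, which I would double-check, is that the answer decides \textsc{Heavy-Row}. On an input $x\in R_1$, the returned vertex lies in $A_{i^\star}$. On an input $x\in R_0$, it is an arbitrary vertex of $A$. So outputting a vertex alone does not yet decide the problem. The fix is to spend a few extra queries after the vertex $a\in A_i$ is returned:
\begin{itemize}
\item compute $|x_i|$ exactly by querying the $m$ edges between $a$ and $B$, which costs $O(m)$ queries;
\item answer ``yes'' if and only if $|x_i|=k+1$.
\end{itemize}
This check is correct in both cases. If $x\in R_1$, the returned vertex has $i=i^\star$, so the count is $k+1$ and we answer ``yes.'' If $x\in R_0$, every row has weight $k$, so the count is $k$ and we answer ``no.''

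The overhead is $O(m)=o(m^{3/2})$ queries. Hence an $o(N^{3/2})$-query algorithm for \textsc{MaxOutDeg-Tournament} would solve \textsc{Heavy-Row} in $o(m^{3/2})$ queries, contradicting Theorem~\ref{thm:heavyrow}.

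Combining the two halves gives $\Theta(N^{3/2})$. The only mildly delicate step is the coherent implementation of degree evaluation inside the maximum-finding routine, and this is handled by standard compute--copy--uncompute at constant-factor overhead.
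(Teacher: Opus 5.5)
Your proposal is correct and follows the paper's proof. The upper bound is the same D\"urr--H\o yer maximum finding over $O(N)$-query degree evaluations, and the lower bound is the \textsc{Heavy-Row} reduction of Proposition~\ref{prop:maxdeg-lower}. Your row check, which turns the returned vertex into a yes/no answer, is the same step the paper writes out in its strong-king reduction (Theorem~\ref{thm:strongkings}) and leaves implicit for maximum out-degree. You should also state that if the returned vertex is not in $A$ (possible with probability at most $1/3$, or a padding vertex when $N$ is not of the form $10m+1$), you answer ``no''.
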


\section{Strong Kings}\label{sec:strongking}
In this section, we prove that the quantum query complexity of finding a strong king in a tournament is $\Theta(N^{3/2})$. By Lemma \ref{lem:maxdeg-strongking} and Theorem \ref{thm:max-outdegreetournament}, the upper bound follows directly. For the lower bound, the construction which we use for maximum out-degree is the same. To complete the argument for strong kings, we need the following observation and lemma. 

\noindent Since $A$ is a circulant tournament it is easy to observe the following.
\begin{description}\label{Obs:triple}
    \item[Observation:] For every vertex $a_s\in A$ and every $i\in\{0,\dots,m-1\}$, some vertex of $A_i$ beats $a_s$.
\end{description}

Using the above Observation, we obtain the following lemma.
\begin{lemma}\label{lem:strongkings}
\begin{enumerate}[label=(\alph*)]
  \item If $x$ is a no-instance, the strong kings of $T_x$ are exactly the vertices of $A$.
  \item If $x$ is a yes-instance with heavy row $i^\star$, the strong kings of $T_x$ are exactly the three vertices of $A_{i^\star}$.
\end{enumerate}
\end{lemma}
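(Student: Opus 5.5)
We use the characterization in Lemma~\ref{lem:strongking-char}: $v$ is a strong king if and only if every in-neighbour $u$ of $v$ satisfies $d^+(u)\le d^+(v)$. The out-degrees computed in the proof of Proposition~\ref{prop:maxdeg-lower} are the input. Vertices of $A$ have out-degree $8m$ or $8m+1$. Vertices of $B$ have out-degree less than $4m$, and vertices of $C$ have out-degree exactly $4m$. The plan is to check each of the three sets $A$, $B$, $C$ in both cases.

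First we rule out $B$ and $C$ in both cases. A vertex $c\in C$ is beaten by every vertex of $A$, and each of these has out-degree at least $8m>4m=d^+(c)$. So $c$ is not a strong king. For $b_j\in B$, every vertex of $C$ beats $b_j$ and has out-degree $4m>d^+(b_j)$. Since $C$ is nonempty, $b_j$ is not a strong king either. No case analysis on $x$ is needed here.

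Next we handle $A$. In a no-instance every vertex of $A$ has out-degree $8m=\dmax^+(T_x)$, so each is a strong king by Lemma~\ref{lem:maxdeg-strongking}. This proves (a). In a yes-instance, the three vertices of $A_{i^\star}$ attain $\dmax^+(T_x)=8m+1$ and are strong kings by the same lemma. It remains to show that no vertex $a_s\in A\setminus A_{i^\star}$ is a strong king. Such a vertex has $d^+(a_s)=8m$. By the Observation applied with $i=i^\star$, some vertex $a\in A_{i^\star}$ beats $a_s$. Then $a$ is an in-neighbour of $a_s$ with $d^+(a)=8m+1>8m$, so Lemma~\ref{lem:strongking-char} shows that $a_s$ is not a strong king. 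This proves (b).

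The only nontrivial ingredient is the Observation, and it is the step we would verify most carefully. In the circulant tournament on $3m$ vertices, $a_s$ is beaten by $a_t$ exactly when $(s-t)\bmod 3m\in\{1,\dots,\frac{3m-1}{2}\}$. This set contains $\frac{3m-1}{2}\ge m$ consecutive residues, so it covers every residue class modulo $m$. Hence some $t\equiv i \pmod m$ satisfies $a_t\to a_s$. For $m\ge3$ this is immediate, and it finishes the argument.
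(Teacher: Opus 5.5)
Your proof is correct and follows the paper's argument step for step. You rule out $B$ via the vertices of $C$ and $C$ via the vertices of $A$, you use maximum out-degree for the winners, and you use the Observation to eliminate $A\setminus A_{i^\star}$. Your check of the Observation, that the $\frac{3m-1}{2}\ge m$ consecutive in-neighbour indices cover every residue class modulo $m$, is a welcome addition, since the paper only asserts it.
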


\begin{proof}
We use Lemma~\ref{lem:strongking-char} throughout.

\emph{No vertex of $B$ is a strong king:} each $b\in B$ is beaten by every $c\in C$, and $d^+(c)=4m>d^+(b)$.

\emph{No vertex of $C$ is a strong king:} each $c\in C$ is beaten by every $a\in A$, and $d^+(a)\ge 8m > 4m = d^+(c)$.

In a no-instance, every vertex of $A$ has out-degree $8m$, the maximum in $T_x$. By Lemma~\ref{lem:maxdeg-strongking}, all of them are strong kings.

In a yes-instance, the vertices of $A_{i^\star}$ have the unique maximum out-degree $8m+1$, so they are strong kings by Lemma~\ref{lem:maxdeg-strongking}. Every other $a_s\in A\setminus A_{i^\star}$ has out-degree $8m$. By Observation~\ref{Obs:triple} some $a_r\in A_{i^\star}$ beats $a_s$, and $d^{+}(a_r)=8m+1>d^+(a_s)$, so $a_s$ is not a strong king.
\end{proof}

\begin{theorem}\label{thm:strongkings}
$Q($\prob{StrongKings}$)$ = $\Theta(N^{3/2})$.
\end{theorem}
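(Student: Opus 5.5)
The proof has two halves: an upper bound that is already available, and a lower bound that uses the tournament $T_x$ from Proposition~\ref{prop:maxdeg-lower} together with Lemma~\ref{lem:strongkings}.

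\emph{Upper bound.} By Lemma~\ref{lem:maxdeg-strongking}, every vertex of maximum out-degree is a strong king. Hence any algorithm for \textsc{MaxOutDeg-Tournament} also solves \prob{StrongKings}. Theorem~\ref{thm:max-outdegreetournament} then gives $O(N^{3/2})$ queries: learn each out-degree classically with $N-1$ queries, and run quantum maximum finding over the $N$ vertices.

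\emph{Lower bound.} We reduce from \HR. Given an instance $x$, build $T_x$ on $N=10m+1$ vertices as in Proposition~\ref{prop:maxdeg-lower}. Each query to $T_x$ costs at most one query to $x$. Suppose a bounded-error algorithm outputs a strong king $v$ of $T_x$. By Lemma~\ref{lem:strongkings}, $v\in A$ in either case, so $v\in A_i$ for some $i$. In a yes-instance, moreover, $i=i^\star$.

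The main obstacle is turning this output into a decision for \HR, because in a no-instance every vertex of $A$ is a strong king and the output row $i$ is arbitrary. We handle this by certifying the candidate row with $m$ classical queries reading row $i$ of $x$. If $|x_i|=k+1$ we answer ``yes,'' otherwise ``no.'' This is correct whenever the algorithm's output is correct: in a yes-instance $i=i^\star$, so the row has weight $k+1$, and in a no-instance every row has weight $k$.

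The total cost is $q+m$ queries, so $q+m=\Omega(m^{3/2})$ by Theorem~\ref{thm:heavyrow}, and therefore $q=\Omega(m^{3/2})=\Omega(N^{3/2})$. To cover $N$ not of the form $10m+1$, we pad with $n_0\le 19$ transitive vertices that lose to all of $T_x$, as in Proposition~\ref{prop:maxdeg-lower}. We need the padding to leave the strong kings unchanged. Each new vertex is beaten by vertices of $T_x$, whose out-degrees are at least $n_0$ and hence exceed its own, so by Lemma~\ref{lem:strongking-char} no new vertex is a strong king. Every old vertex gains exactly $n_0$ out-neighbours, and all of them are new vertices, so its set of in-neighbours is unchanged and all out-degrees within $T_x$ shift uniformly. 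The criterion of Lemma~\ref{lem:strongking-char} is therefore unaffected for old vertices, and the strong kings are exactly those of $T_x$. Combining the two bounds yields $\Theta(N^{3/2})$.
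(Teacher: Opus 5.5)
Your proof is correct and follows the paper's argument essentially step for step: the upper bound via Lemma~\ref{lem:maxdeg-strongking} and Theorem~\ref{thm:max-outdegreetournament}, the reduction from \HR{} on $T_x$ certified by reading the returned row with $m$ extra queries, and the same padding check via Lemma~\ref{lem:strongking-char}. The only detail the paper makes explicit and you leave implicit is what to answer when the algorithm errs and returns a vertex outside $A$; answering ``no'' there, as the paper does, also makes the reduction one-sided, since no-instances are then always answered correctly.
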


\begin{proof}
Let $\mathcal A$ be a quantum algorithm that outputs a strong king of any $N$-vertex tournament with probability at least $2/3$ as follows.
\begin{enumerate}[label=\arabic*.]
  \item Run $\mathcal A$ on $T_x$ (with $N=10m+1$), simulating each tournament query with $O(1)$ queries to $x$. Let $v$ be its output.
  \item If $v\in A_i$ for some $i$, read the $m$ bits of row $i$ and answer \emph{yes} if and only if $|x_i|=k+1$.
  \item If $v\notin A$, answer \emph{no}.
\end{enumerate}

\emph{Correctness.} On a no-instance every row has weight $k$, so the algorithm always answers \emph{no}, whatever $\mathcal A$ outputs. On a yes-instance, with probability at least $2/3$ the output $v$ is a strong king, hence $v\in A_{i^\star}$ by Lemma~\ref{lem:strongkings}(b), and the row check answers \emph{yes}. So the error is at most $1/3$.

\emph{Cost.} Step~1 makes at most $q$ queries to $x$, and Step~2 makes $m$
queries. Hence the algorithm solves \prob{Heavy-Row} with at most $q+m$
queries. Since \prob{Heavy-Row} requires $\Omega(m^{3/2})$ queries
(Theorem~\ref{thm:heavyrow}), we get $q+m=\Omega(m^{3/2})$, that is,
$q=\Omega(m^{3/2})=\Omega(N^{3/2})$.

For general $N\ge31$, we use the tournament $T'_x$ from the
proof of Proposition~\ref{prop:maxdeg-lower}, obtained from $T_x$ by adding $n_0\le19$ new vertices, ordered transitively among themselves, each of which loses to every vertex of $T_x$. It remains to check that $T'_x$ and $T_x$ have the same strong kings, for which we use Lemma~\ref{lem:strongking-char}. Let $v$ be a vertex of $T_x$. Every new vertex loses to $v$, so $v$ has the same in-neighbours in $T'_x$ as in $T_x$, and every out-degree in $T_x$ increases by
exactly $n_0$; hence $v$ is a strong king of $T'_x$ if and only if it is a strong king of $T_x$. A new vertex $w$ is beaten by every vertex of $T_x$, and these vertices have out-degree at least $n_0>d^+(w)$ in $T'_x$, so $w$ is not a strong king. Therefore the algorithm above works with $T'_x$ in place of $T_x$, and $q=\Omega(m^{3/2})=\Omega(N^{3/2})$ for every $N\ge31$, since
$m=\Theta(N)$.
\end{proof}

\section{Quantum query complexity of Tournament Solutions}

\subsection{Condorcet Winner}

In this problem, we must output the Condorcet winner of an $N$-vertex tournament if it exists; otherwise, we report that no Condorcet winner exists.

\begin{lemma}\label{lem:unique-king}
If $w$ is a Condorcet winner of $T$, then $w$ is the unique king of $T$.
\end{lemma}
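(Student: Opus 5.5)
The plan is to verify two things directly from the definitions: that $w$ is a king, and that no other vertex is. Both rest on the fact that a Condorcet winner beats every other vertex, so $d^+(w)=N-1$.

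\emph{Step 1: $w$ is a king.} Since $w\to v$ for every $v\neq w$, each other vertex is reachable from $w$ by a path of length one. Alternatively, $w$ has maximum out-degree, so it is a king by the observation preceding Lemma~\ref{lem:copeland-in-tc}.

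\emph{Step 2: no other vertex is a king.} Fix $v\neq w$. I will show that $w$ is not reachable from $v$ by any directed path, so in particular not by one of length at most two. Since $w$ beats every other vertex, $w$ has no in-neighbours at all. Hence no directed path ends at $w$ unless it is the trivial path starting at $w$. Thus $v$ cannot reach $w$, and $v$ is not a king.

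Combining the two steps, $\mathrm{UC}(T)=\{w\}$. A shorter route is also available through Lemma~\ref{lem:tc-reach}: $\{w\}$ is dominant, so $\mathrm{TC}(T)=\{w\}$, and the inclusion $\mathrm{UC}(T)\subseteq\mathrm{TC}(T)$ gives the same conclusion. I do not expect any step to be an obstacle; the only care needed is to note that $N\ge 1$ and that the trivial case $N=1$ also holds.
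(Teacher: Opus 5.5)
Your proof is correct and matches the paper's argument: $w$ reaches every other vertex in one step, and since $w$ has in-degree $0$, no other vertex can reach it by any path, so no other vertex is a king. Your alternative route, via $\{w\}$ being dominant (so $\mathrm{TC}(T)=\{w\}$) together with $\mathrm{UC}(T)\subseteq\mathrm{TC}(T)$, is also valid, though the paper does not use it.
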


\begin{proof}
The vertex $w$ reaches every vertex in one step, so it is a king. Since $w$ has in-degree $0$, no other vertex can reach $w$ by a path of any length, so no other vertex is a king.
\end{proof}

\noindent Since a Condorcet winner is the unique king, the following theorem gives quantum query bounds for finding a Condorcet winner.

\begin{theorem}\label{thm:condorcet}
The bounded-error quantum query complexity of finding a Condorcet winner in an $N$-vertex tournament is $\Omega(\sqrt{N})$ and $O(\sqrt{N}\polylog N)$.
\end{theorem}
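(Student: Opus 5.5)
The upper bound combines Theorem~\ref{thm:king} with a verification step based on Theorem~\ref{thm:bbht}; the lower bound embeds $\textsc{OR}$ into a tournament and invokes Theorem~\ref{thm:or}.

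\emph{Upper bound.} First I run the king-finding algorithm of Theorem~\ref{thm:king}, amplified by $O(1)$ repetitions so that its error is at most $1/6$. This costs $O(\sqrt N\,\polylog N)$ queries and returns a candidate $v$.

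If $T$ has a Condorcet winner $w$, then Lemma~\ref{lem:unique-king} says $w$ is the unique king, so $v=w$ with probability at least $5/6$. Next I test whether $v$ is a Condorcet winner. I do this with a Grover search over the $N-1$ other vertices $u$, marking $u$ when $u\to v$; each evaluation of the marking function costs one query. By Theorem~\ref{thm:bbht} with constant $\delta$ (say $\delta=1/6$), this takes $O(\sqrt N)$ queries. If no marked vertex is reported, I output $v$; otherwise I report that no Condorcet winner exists.

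Correctness splits into two cases.
\begin{itemize}
  \item If a Condorcet winner exists, then with probability at least $5/6$ we have $v=w$. Since $w$ has no in-neighbours, the search has $t=0$, and the algorithm that stops after $O(\sqrt N\log(1/\delta))$ queries always reports ``no marked element'' when $t=0$. So in this case we output $w$ with probability at least $5/6$.
  \item If no Condorcet winner exists, then whatever $v$ is, it has some in-neighbour, so $t\ge1$. The search then finds a marked element except with probability $\delta\le1/6$, and we correctly report that none exists.
\end{itemize}
In both cases the error is at most $1/3$.

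This structure is one-sided: we only ever output a vertex that the search has certified. Because of that, an error of the king subroutine in the no-winner case is harmless. The total cost is $O(\sqrt N\,\polylog N)$.

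\emph{Lower bound.} I reduce from $\textsc{OR}_n$ with $n=N-1$. Given $x\in\{0,1\}^{N-1}$, I build $T_x$ on vertices $w,v_1,\dots,v_{N-1}$. The vertices $v_i$ form a fixed transitive tournament, and the edge between $w$ and $v_i$ is $v_i\to w$ if $x_i=1$ and $w\to v_i$ otherwise. Then $w$ is a Condorcet winner iff $x=0^n$.

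When $x\neq0^n$, I need to rule out any other vertex being a Condorcet winner. Each $v_i$ loses to some other $v_j$ in the transitive tournament, except the top vertex $v_1$. So I orient the fixed part so that $v_1$ also loses to some vertex regardless of $x$. The simplest way is to fix an extra vertex $z$ beating $v_1$ and losing to $w$, and set $n=N-2$. Then for $x\neq0^n$ there is no Condorcet winner at all.

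An algorithm for the Condorcet winner problem therefore decides $\textsc{OR}_n$. Each query to $T_x$ costs at most one query to $x$, so Theorem~\ref{thm:or} gives $\Omega(\sqrt N)$.

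\emph{Main obstacle.} The only delicate point is the error accounting in the upper bound: a wrong king output in the no-winner case must not cause an incorrect ``yes''. The one-sided verification described above handles this. The rest is routine.
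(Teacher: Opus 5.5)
Your proposal is correct and follows essentially the same route as the paper. The upper bound is the same: a king from Theorem~\ref{thm:king}, then a one-sided Grover check for an in-neighbour. The lower bound is the same $\textsc{OR}$ embedding in the edges at $w$. Your transitive tournament plus the extra vertex $z$ with $w\to z\to v_1$ does the job of the paper's fixed cycle $u_i\to u_{i+1 \bmod m}$, which gives every vertex other than $w$ an in-neighbour. Your error accounting is more explicit than the paper's, and the amplification of the king finder is not actually needed: without it the error is already at most $1/3$ when a winner exists and at most $\delta$ otherwise.
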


\begin{proof}
The algorithm first finds a king $v$ using Theorem \ref{thm:king}, then uses Grover search to find a vertex $u$ with $u\to v$ in $O(\sqrt N)$ queries. If none exists, it outputs $v$; otherwise, it reports that no Condorcet winner exists.

For the lower bound, we reduce from $\mathrm{OR}_m$, whose bounded-error quantum query complexity
is $\Omega(\sqrt{m})$~\cite{BBCMW01}. Let $m=N-1\ge 3$ and let
$y\in\{0,1\}^m$ be an input to $\mathrm{OR}_m$. We construct a tournament
$T_y$ on the vertex set $V=\{w\}\cup U$, where $U=\{u_1,\dots,u_m\}$.

\emph{Edges inside $U$.} We fix a tournament on $U$ in which every vertex has
in-degree at least one: we let $u_i\to u_{i+1 \bmod m}$ for every $i$, and
orient the remaining pairs of $U$ arbitrarily but fixed, independently of $y$.

\emph{Edges at $w$.} For every $i\in[m]$, we let $u_i\to w$ if $y_i=1$, and
$w\to u_i$ if $y_i=0$.

\emph{Correctness.} No vertex $u_i$ is a Condorcet winner of $T_y$, because it
is beaten by $u_{i-1 \bmod m}$. The vertex $w$ is a Condorcet winner if and
only if it beats every $u_i$, that is, if and only if $y_i=0$ for every $i$. Thus, $T_y$ has a Condorcet winner if and only if $y=0^m$, which is if and only if $\mathrm{OR}_m(y)=0$.

Thus, running any $T$-query bounded-error quantum algorithm for deciding whether an $N$-vertex tournament has a Condorcet winner on $T_y$ and negating its output computes $\mathrm{OR}_m(y)$ using $T$ queries. Hence, $T=\Omega(\sqrt{m})=\Omega(\sqrt{N})$.
\end{proof}

\begin{corollary}\label{cor:nonloser}
The Condorcet non-loser set can be computed with $O(\sqrt N\polylog N)$ quantum queries, and $\Omega(\sqrt N)$ queries are necessary.
\end{corollary}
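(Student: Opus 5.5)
The plan is to reduce the non-loser set to the Condorcet-winner problem by reversing all edges, and to obtain the lower bound by running the reduction from $\textsc{OR}_m$ in the proof of Theorem~\ref{thm:condorcet} in reverse.

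\textbf{Upper bound.} A tournament has at most one Condorcet loser, since two vertices of out-degree $0$ would have to lose to each other. So the Condorcet non-loser set is either $V$ or $V\setminus\{\ell\}$, where $\ell$ is the unique Condorcet loser. A Condorcet loser of $T$ is exactly a Condorcet winner of the reversed tournament $T^{\mathrm{rev}}$. Any query to $T^{\mathrm{rev}}$ is one query to $T$ with the answer bit negated, which costs a single oracle call. I would therefore run the algorithm of Theorem~\ref{thm:condorcet} on $T^{\mathrm{rev}}$: find a king $v$ of $T^{\mathrm{rev}}$ with Theorem~\ref{thm:king}, then Grover-search for an in-neighbour of $v$ in $T^{\mathrm{rev}}$. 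If it returns a Condorcet winner $\ell$, output $V\setminus\{\ell\}$; otherwise output $V$. This uses $O(\sqrt N\polylog N)$ queries. The only care needed is error control: both subroutines have bounded error, and constant repetition of the final Grover check (Theorem~\ref{thm:bbht}) keeps the overall error at most $1/3$.

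\textbf{Lower bound.} I would take the tournament $T_y$ from the proof of Theorem~\ref{thm:condorcet} and reverse every edge to get $T_y^{\mathrm{rev}}$. In the reversed tournament, each $u_i$ beats $u_{i-1 \bmod m}$, so no $u_i$ is a loser. The vertex $w$ is a Condorcet loser if and only if $y=0^m$. Hence the non-loser set of $T_y^{\mathrm{rev}}$ equals $U$ if $\mathrm{OR}_m(y)=0$, and equals $V$ otherwise. Each edge of $T_y^{\mathrm{rev}}$ is fixed or depends on a single bit of $y$, so each query is simulated with at most one query to $y$. Checking whether $w$ lies in the output then computes $\mathrm{OR}_m(y)$, which gives $\Omega(\sqrt m)=\Omega(\sqrt N)$ by Theorem~\ref{thm:or}.

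\textbf{Main obstacle.} The only subtle point is making the output format precise, namely that outputting the whole set costs nothing extra because it is described by at most one excluded vertex. That rests on the uniqueness of the Condorcet loser; the rest is routine.
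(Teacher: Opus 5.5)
Your proposal is correct and follows essentially the paper's intended argument: the paper states this corollary without proof, relying on the symmetry you make explicit (a Condorcet loser of $T$ is a Condorcet winner of the reversed tournament), which covers both the algorithm of Theorem~\ref{thm:condorcet} and the reversed $\textsc{OR}_m$ construction. One small fix to your error analysis: repeating only the final Grover check does not reduce the $1/3$ failure probability of the king-finding step, so repeat the whole king-plus-check procedure a constant number of times and output a candidate that passes the check; this works because a Condorcet winner of $T^{\mathrm{rev}}$, if it exists, is its unique king, and the in-neighbour check never rejects it.
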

\subsection{Copeland Set}\label{sec:copeland}

The Copeland set of a tournament is the set of all its vertices of maximum
out-degree. Finding one such vertex was treated in Section~\ref{sec:maxdeg};
here we compute the whole set. Querying all $\binom N2$ edges computes it, so
the problem is only interesting for lower bounds, and we show that this
trivial algorithm is optimal up to a constant factor. For the lower bound, we
generalize \textsc{Heavy-Row} from one heavy row to many.

\begin{definition}[\textsc{Heavy-Rows$_{k}$}]\label{def:heavyrows}
Fix an odd integer $m\ge3$ and let $k=(m-1)/2$, as in
Definition~\ref{def:heavyrow}. For $r\in\{0,\dots,m\}$, let $R_r$ be the set
of $m\times m$ Boolean matrices with exactly $r$ rows of weight $k+1$ and the
remaining $m-r$ rows of weight $k$; we call the rows of weight $k+1$
\emph{heavy}. In \textsc{Heavy-Rows}, we are given $x\in R_k\cup R_{k+1}$ and
must decide whether $x\in R_k$ or $x\in R_{k+1}$.
\end{definition}

\begin{lemma}\label{lem:heavyrows-lb}
$Q(\textsc{Heavy-Rows})=\Omega(m^2)$.
\end{lemma}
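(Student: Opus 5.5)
We apply Ambainis's adversary theorem (Theorem~\ref{thm:ambainis}) in the same way as in the proof of Theorem~\ref{thm:heavyrow}, now with $X=R_k$ and $Y=R_{k+1}$. The relation is
\[
R=\{(x,y)\in R_k\times R_{k+1}: y \text{ is obtained from } x \text{ by changing a single } 0 \text{ into a } 1\}.
\]
Whenever $(x,y)\in R$, the flipped entry lies in a row of $x$ of weight $k$. That row becomes heavy in $y$, and every other row is unchanged. So $y$ has exactly $k+1$ heavy rows and lies in $Y$.

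\emph{Parameter $M$.} Fix $x\in R_k$. It has $m-k=k+1$ light rows, each of weight $k$ and hence containing $m-k=k+1$ zeros. Flipping any of these zeros gives a related $y$. Flipping a zero in a heavy row would raise that row to weight $k+2$, which is invalid. Hence $M=(k+1)^2$.

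\emph{Parameter $M'$.} Fix $y\in R_{k+1}$. It has $k+1$ heavy rows, each containing $k+1$ ones. Clearing any one of these ones makes that row light and gives an element of $R_k$. Clearing a one in a light row would drop that row to weight $k-1$, which is invalid. Hence $M'=(k+1)^2$.

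\emph{Parameters $L$ and $L'$.} Related pairs differ in exactly one position. So for a fixed $x$ and a fixed position $(i,j)$, at most one related $y$ differs from $x$ at $(i,j)$, namely the one obtained by flipping $(i,j)$ itself. Hence $L=1$, and by the same argument $L'=1$.

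Theorem~\ref{thm:ambainis} then gives
\[
\Omega\bigl(\sqrt{MM'/(LL')}\bigr)=\Omega\bigl((k+1)^2\bigr)=\Omega(m^2),
\]
since $k+1=\frac{m+1}{2}$.

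The only point that needs care is choosing the two promise classes so that both the counting sides are large. Using $R_k$ and $R_{k+1}$, which sit in the middle of the range, ensures there are about $m/2$ light rows available to become heavy and about $m/2$ heavy rows available to become light. This is what makes $M$ and $M'$ both of order $m^2$, rather than only one of them as happened in the single-heavy-row setting. The remaining checks are routine: every flip within a valid row leaves all other rows unchanged, so validity of the resulting input only has to be verified for the one affected row.
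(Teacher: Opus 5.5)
Your proposal is correct and follows the paper's proof essentially verbatim. It uses the same choice $X=R_k$, $Y=R_{k+1}$, the same single $0\to1$ flip relation, and the same counts $M=M'=(k+1)^2$ and $L=L'=1$, so Theorem~\ref{thm:ambainis} gives the bound $(k+1)^2=\Omega(m^2)$.
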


\begin{proof}
We apply Theorem~\ref{thm:ambainis} with $X=R_k$ and $Y=R_{k+1}$, relating
$x\in X$ and $y\in Y$ if $y$ is obtained from $x$ by changing a single $0$
into a $1$. Note that $m-k=k+1=(m+1)/2$.

\emph{Parameter $M$.} Let $x\in X$. It has $m-k=k+1$ non-heavy rows, and each
of them has $m-k=k+1$ zeros. Changing one of these zeros into a $1$ makes that
row heavy and yields an element of $Y$, while changing a $0$ in a heavy row
creates a row of weight $k+2$. Hence $M=(k+1)^2$.

\emph{Parameter $M'$.} Let $y\in Y$. It has $k+1$ heavy rows, and each of them
has $k+1$ ones. Changing one of these ones into a $0$ yields an element of
$X$, while changing a $1$ in a non-heavy row creates a row of weight $k-1$.
Hence $M'=(k+1)^2$.

\emph{Parameters $L$ and $L'$.} Related inputs differ in exactly one position,
so $L=L'=1$.

Theorem~\ref{thm:ambainis} gives the bound
$\sqrt{MM'/(LL')}=(k+1)^2=\bigl((m+1)/2\bigr)^2=\Omega(m^2)$.
\end{proof}

We use the tournament $T_x$ on $N=10m+1$ vertices from the proof of
Proposition~\ref{prop:maxdeg-lower}.

\begin{lemma}\label{lem:copeland-tx}
Let $x\in R_r$ with $r\ge1$. Then $\mathrm{CO}(T_x)$ is the union of the
triples $A_i$ of the heavy rows $i$ of $x$, so $|\mathrm{CO}(T_x)|=3r$.
\end{lemma}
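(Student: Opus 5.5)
The plan is to reuse the out-degree computation from the proof of Proposition~\ref{prop:maxdeg-lower} verbatim. That computation never used the promise that at most one row is heavy; it used only that every row has weight $k$ or $k+1$, which holds for every $x\in R_r$. So I will first restate those degree values for a general $x\in R_r$, and then read off which vertices attain the maximum.

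\emph{Vertices of $A$.} For $a\in A_i$, the decomposition into out-neighbours inside $A$, in $B$ and in $C$ gives
\[
  d^+(a)=\tfrac{3m-1}{2}+|x_i|+(6m+1).
\]
This equals $8m$ when row $i$ has weight $k$ and $8m+1$ when row $i$ is heavy. Here I only need that the circulant tournament on $A$ is regular, with out-degree $(3m-1)/2$, and that the edges between $A$ and $B$ depend only on the row index $i$ of the triple.

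\emph{Vertices of $B$ and $C$.} For $b_j\in B$, the same bound as before applies:
\[
  d^+(b_j)\le \tfrac{m-1}{2}+3m<4m.
\]
For $c\in C$ we have $d^+(c)=4m$. Both are below $8m$.

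\emph{Conclusion.} Since $r\ge1$, at least one row is heavy, so the maximum out-degree is exactly $8m+1$. This value is attained precisely by the vertices $a\in A_i$ with $i$ a heavy row. Thus $\mathrm{CO}(T_x)=\bigcup_{i\text{ heavy}}A_i$. The triples $A_i$ are pairwise disjoint and each has exactly three elements, because the indices $i, i+m, i+2m$ are distinct modulo $3m$. Hence $|\mathrm{CO}(T_x)|=3r$.

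The only point needing care is the hypothesis $r\ge1$. When $r=0$, the maximum is $8m$ and all of $A$ is the Copeland set instead. Beyond that I expect no real obstacle; the one thing to check is that the bound on $d^+(b_j)$ does not depend on how many rows are heavy, and it does not, since it uses only the trivial bound $|\{i:x_{ij}=0\}|\le m$.
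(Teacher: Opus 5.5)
Your proposal is correct and follows essentially the same route as the paper: you reuse the out-degree computation from Proposition~\ref{prop:maxdeg-lower}, which needs only that every row has weight $k$ or $k+1$, and read off that the maximum $8m+1$ is attained exactly on the triples of the heavy rows. Your remark that the triples are pairwise disjoint and each has three elements, which justifies $|\mathrm{CO}(T_x)|=3r$, is a detail the paper leaves implicit.
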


\begin{proof}
By the proof of Proposition~\ref{prop:maxdeg-lower}, every vertex of a triple
$A_i$ has out-degree $8m+1$ if row $i$ is heavy and $8m$ otherwise, and every
vertex of $B\cup C$ has out-degree less than $8m$. Since $r\ge1$, the maximum
out-degree is $8m+1$, and it is attained exactly by the vertices of the
triples of the heavy rows.
\end{proof}

\begin{theorem}\label{thm:copeland}
The bounded-error quantum query complexity of computing the Copeland set of
an $N$-vertex tournament is $\Theta(N^2)$.
\end{theorem}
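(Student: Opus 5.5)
The upper bound is the trivial algorithm already mentioned before Definition~\ref{def:heavyrows}: query all $\binom N2$ edges, compute every out-degree, and output the vertices attaining the maximum. This uses $O(N^2)$ queries and never errs. All the work is in the lower bound, which we obtain by reducing \textsc{Heavy-Rows} to the computation of $\mathrm{CO}(T_x)$ and then invoking Lemma~\ref{lem:heavyrows-lb}.

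For the reduction, take $x\in R_k\cup R_{k+1}$ and build the tournament $T_x$ on $N=10m+1$ vertices from the proof of Proposition~\ref{prop:maxdeg-lower}. Each query to $T_x$ costs at most one query to $x$. Since $k=(m-1)/2\ge1$ for $m\ge3$, both $r=k$ and $r=k+1$ satisfy $r\ge1$, so Lemma~\ref{lem:copeland-tx} applies. It gives $|\mathrm{CO}(T_x)|=3k$ if $x\in R_k$ and $|\mathrm{CO}(T_x)|=3(k+1)$ if $x\in R_{k+1}$.

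Given a $q$-query bounded-error algorithm for the Copeland set, run it on $T_x$ and answer according to whether the output set has size $3k$ or $3k+3$. Whenever the algorithm is correct, which happens with probability at least $2/3$, this answer is correct, and no further queries are needed. Lemma~\ref{lem:heavyrows-lb} then gives $q=\Omega(m^2)=\Omega(N^2)$.

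For values of $N$ not of the form $10m+1$ with $m$ odd, we use the padded tournament $T'_x$ from the proof of Proposition~\ref{prop:maxdeg-lower}. That proof already shows that $T'_x$ has the same set of maximum out-degree vertices as $T_x$, that is, the same Copeland set. Since $m=\Theta(N)$, the bound $\Omega(N^2)$ holds for all $N\ge31$.

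I do not expect a real obstacle here. The counting difficulty is entirely contained in Lemma~\ref{lem:heavyrows-lb}. The only points to check are that the promise values $r=k$ and $r=k+1$ are both at least $1$, so that Lemma~\ref{lem:copeland-tx} gives the claimed sizes, and that the padding does not change the Copeland set.
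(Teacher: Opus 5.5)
Your proposal is correct and follows the paper's proof essentially step for step. It uses the trivial $O(N^2)$ upper bound, the reduction from \textsc{Heavy-Rows} via $T_x$ and Lemma~\ref{lem:copeland-tx} (distinguishing $|\mathrm{CO}(T_x)|=3k$ from $3k+3$, after checking that $r=k$ and $r=k+1$ are both at least $1$), and the same transitive padding with input-independent edges for general $N\ge 31$.
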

 
\begin{proof}
The upper bound $O(N^2)$ holds since an algorithm can query all $\binom N2$
edges.
 
For the lower bound, first let $N=10m+1$ for an odd integer $m\ge3$, and let
$k=(m-1)/2\ge1$. Let $x\in R_k\cup R_{k+1}$. By Lemma~\ref{lem:copeland-tx},
\[
  |\mathrm{CO}(T_x)|=
  \begin{cases}
    3k,   & x\in R_k,\\
    3k+3, & x\in R_{k+1}.
  \end{cases}
\]
Let $\mathcal A$ be a bounded-error quantum algorithm that computes the
Copeland set using $q$ queries. We run $\mathcal A$ on $T_x$, simulating each
query to $T_x$ with at most one query to $x$, and answer that $x\in R_{k+1}$ if
and only if the returned set has $3k+3$ vertices. This solves
\textsc{Heavy-Rows} with at most $q$ queries, so $q=\Omega(m^2)=\Omega(N^2)$ by
Lemma~\ref{lem:heavyrows-lb}.
 
Now let $N\ge31$ be arbitrary, let $m$ be the largest odd integer with
$10m+1\le N$, so that $m\ge3$ and $m=\Theta(N)$, and let $n_0=N-10m-1$. We
add $n_0$ new vertices to $T_x$, ordered transitively among themselves, each
of which loses to every vertex of $T_x$. Every vertex of $T_x$ gains exactly
$n_0$ out-neighbours, so the order of the out-degrees within $T_x$ is
unchanged, and every vertex of $T_x$ now has out-degree at least $n_0$. Each
new vertex beats only new vertices, so its out-degree is at most $n_0-1$.
Hence, the Copeland set does not change. The new edges do not depend on $x$, so
each query is still simulated with at most one query to $x$, and the bound
$\Omega(m^2)=\Omega(N^2)$ holds for every $N\ge31$. For $N<31$ the statement
is trivial.
\end{proof}

\section{Top Cycle}\label{sec:topcycle}
In this section, we establish quantum upper and lower bounds for finding the top cycle of a tournament. Classically, computing the top cycle requires $\Theta(N^2)$ queries~\cite{MD24}. We show that its quantum query complexity is $\tilde{\Theta}(N)$. Our algorithm first finds a Hamiltonian path. The top cycle forms a prefix of this path, and finding the end of this prefix reduces to an interval-covering problem, which we solve using divide-and-conquer.

Fix a Hamiltonian path $v_1\to v_2\to\dots\to v_N$. For $1\le p\le N-1$,
\emph{cut} $p$ separates $\{v_1,\dots,v_p\}$ from $\{v_{p+1},\dots,v_N\}$.
A \emph{back edge} is a pair $(i,j)$ with $i<j$ and $v_j\to v_i$. Since
$v_i\to v_{i+1}$, every back edge satisfies $j\ge i+2$. The back edge
$(i,j)$ \emph{covers} cut $p$ if $i\le p<j$, that is, it covers the cuts
$i,i+1,\dots,j-1$.

\begin{lemma}\label{lem:prefix}
Let $t$ be the smallest cut not covered by any back edge, then $\TC(T)=\{v_1,\ldots,v_t\}$. Further, if every cut is covered by a back edge, then $\TC(T)=V(T)$.
\end{lemma}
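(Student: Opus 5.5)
The plan is to show that the set $S_t=\{v_1,\dots,v_t\}$ is dominant, and that every vertex of $S_t$ lies in $\TC(T)$. Since $\TC(T)$ is the unique smallest dominant set, these two facts together give equality.

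\emph{Dominance.} Suppose cut $t$ is not covered by any back edge. Take any $a\le t<b$. The pair $(a,b)$ cannot be a back edge, since it would then cover cut $t$. Hence $v_a\to v_b$. So every vertex of $S_t$ beats every vertex outside $S_t$, and $S_t$ is dominant. If no cut is uncovered, set $t=N$; then $S_t=V(T)$, which is trivially dominant.

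\emph{Minimality.} By Lemma~\ref{lem:tc-reach}, it suffices to show that some vertex $v\in\TC(T)$ is reachable from every vertex of $S_t$. Equivalently, it suffices that $S_t$ is strongly connected and meets $\TC(T)$.

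First, $S_t$ meets $\TC(T)$. The top cycle is dominant, so it contains a vertex that beats $v_1$, or it contains $v_1$ itself. In fact $v_1\in\TC(T)$: every vertex $v_j$ is reachable from $v_1$ along the Hamiltonian path. If $v_1\notin\TC(T)$, some $v_j\in\TC(T)$ would be reachable from a vertex outside $\TC(T)$. That is impossible, because no arc enters $\TC(T)$ from outside.

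Second, every $v_p$ with $p\le t$ can reach $v_1$. Argue by induction on $p$; the case $p=1$ is trivial. For $2\le p\le t$, cut $p-1<t$ is covered by some back edge $(i,j)$ with $i\le p-1<j$. Take $j\le t$; this is possible because if $j>t$, the edge would cover cut $t$ too. Then $v_p$ walks forward along the path to $v_j$ (if $j\ge p$), takes the arc $v_j\to v_i$ with $i<p$, and by induction reaches $v_1$. Thus every vertex of $S_t$ reaches $v_1\in\TC(T)$, so $S_t\subseteq\TC(T)$ by Lemma~\ref{lem:tc-reach}.

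Combining, $S_t\subseteq\TC(T)$ and $S_t$ is dominant. By minimality of the top cycle, $\TC(T)=S_t$. The case where every cut is covered is handled by the same induction with $t=N$.

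The only delicate point is the induction step: the covering back edge must have $j\ge p$, so that $v_p$ can actually walk forward to $v_j$. This holds because $j>p-1$. The main check is therefore just the index bookkeeping.
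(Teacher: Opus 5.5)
Your proof is correct, but it takes a different route from the paper's. The paper never follows paths. It notes that every dominant set is a prefix of the Hamiltonian path: if $v_j$ is in $D$ but $v_{j-1}$ is not, then $v_j\to v_{j-1}$, contradicting $v_{j-1}\to v_j$. It then notes that a prefix $\{v_1,\dots,v_p\}$ with $p\le N-1$ is dominant exactly when cut $p$ is uncovered. So the smallest dominant set ends at the smallest uncovered cut, or is $V(T)$ if there is none.

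You share the ``uncovered cut gives a dominant prefix'' half. In place of the prefix observation, you prove the inclusion $S_t\subseteq\TC(T)$. You get it from $v_1\in\TC(T)$ and an induction: each $v_p$ with $p\le t$ walks forward along the path to $v_j$, where $(i,j)$ is a back edge covering cut $p-1$, and then uses the arc $v_j\to v_i$ with $i<p$. Together with dominance of $S_t$, this forces $S_t=\TC(T)$ whether ``smallest'' is read by inclusion or by cardinality.

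The paper's argument is shorter and does not need Lemma~\ref{lem:tc-reach}. Yours costs an induction but also shows that $S_t$ is strongly connected, which ties the lemma to the description of the top cycle as the dominating strong component.

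Two minor points. The bound $j\le t$ is forced rather than chosen, and it is only needed to keep the walk inside $S_t$, not to reach $v_1$. Also, ``equivalently'' should read ``it suffices'', since strong connectivity is more than you need.
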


\begin{proof}
First, every dominant set $D$ is a prefix of the path: if $v_j\in D$ and
$v_{j-1}\notin D$ for some $j\ge2$, then $v_j\to v_{j-1}$, contradicting
$v_{j-1}\to v_j$. Second, the prefix $\{v_1,\dots,v_p\}$ with $p\le N-1$ is
dominant if and only if no vertex after cut $p$ beats a vertex before it,
that is, if and only if no back edge covers cut $p$. Since $\TC(T)$ is the
smallest dominant set, the claim follows.
\end{proof}

By Lemmas~\ref{lem:hampath} and~\ref{lem:prefix}, it remains to find the smallest uncovered cut.
We query $b(i,j)=[v_j\to v_i]$, where $i+2\le j$; each query costs one.

Represent each back edge $(i,j)$ as the grid point in column $i$ and
row $j$. For a set $C$ of columns, define
$\operatorname{reach}(C)=\max\{j:b(i,j)=1\text{ for some }i\in C\}$,
with $\max\emptyset=-\infty$.\label{eq:reach} Then, for every $p\ge l$,
cut $p$ is covered by a back edge in a column less than $l$ if and only if
$\operatorname{reach}([1,l-1])>p$.

\begin{fact}\label{fact:rect}
Let $R=[a,b]\times[c,d]$ be a set of columns times rows and let $\delta>0$.
There is a quantum algorithm that uses $O(\sqrt{|R|}\log(1/\delta))$ queries
and, with probability at least $1-\delta$, either correctly reports that $R$
contains no back edge, or returns the smallest column and the largest row
among the back edges in $R$.
\end{fact}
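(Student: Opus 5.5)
We prove Fact~\ref{fact:rect} by running three quantum searches over the rectangle $R$. Each one uses Theorem~\ref{thm:bbht} or D\"urr--H{\o}yer minimum finding~\cite{DH96}, and each has its error probability amplified down to $\delta/3$. The predicate for a grid point $(i,j)\in R$ is $b(i,j)=1$. This predicate costs one query if $j\ge i+2$; if $j\le i+1$ it is fixed to $0$ with no query. Hence every evaluation costs at most one query to $T$, and a search over the $|R|$ points can be carried out coherently.

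\emph{Step 1: emptiness.} We run the search of Theorem~\ref{thm:bbht} on $f\colon R\to\{0,1\}$, $f(i,j)=b(i,j)$, stopped after $O(\sqrt{|R|}\log(3/\delta))$ queries. If it reports no marked element, we declare that $R$ contains no back edge. This is wrong with probability at most $\delta/3$.

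\emph{Step 2: smallest column and largest row.} Otherwise we compute $i^\ast=\min\{i:\exists j,\ (i,j)\in R,\ b(i,j)=1\}$ and $j^\ast=\max\{j:\exists i,\ (i,j)\in R,\ b(i,j)=1\}$ using minimum (resp.\ maximum) finding over the index set $R$. The key order on $R$ is lexicographic, by the column and then by the row. Marked points are those with $b=1$, and unmarked points get key $+\infty$ (resp.\ $-\infty$). Each key evaluation costs one query, so D\"urr--H{\o}yer uses $O(\sqrt{|R|})$ queries with constant success probability. We amplify this by $O(\log(1/\delta))$ independent repetitions. Every candidate output $(i,j)$ can be checked with one query, so we keep the best verified candidate.

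Candidates are never wrong in the bad direction: any verified marked point is a genuine back edge. So the best verified candidate can only overshoot the true minimum (resp.\ undershoot the true maximum), and it equals the true value whenever at least one repetition succeeds. This failure has probability at most $\delta/3$ per quantity. A union bound over the three stages then gives total error at most $\delta$ with $O(\sqrt{|R|}\log(1/\delta))$ queries.

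\emph{Main obstacle.} The delicate point is that minimum finding is usually stated for a total key function over an index set. Here we need it restricted to marked elements, and with error $\delta$ rather than a constant. Making unmarked keys $+\infty$ handles the restriction. Independent repetition with verification handles the error: a single repetition fails with probability at most $1/3$, so $O(\log(1/\delta))$ repetitions drive the failure probability below $\delta/3$. An alternative, if one prefers Theorem~\ref{thm:bbht} alone, is binary search on the column threshold. This calls Theorem~\ref{thm:bbht} on subrectangles $[a,c']\times[c,d]$, but it would cost an extra $\log$ factor. So we commit to the minimum-finding route.
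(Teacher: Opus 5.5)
Your proposal is correct and takes essentially the paper's route: D\"urr--H{\o}yer minimum finding on the columns of back edges and on the (negated) rows, each repeated $O(\log(1/\delta))$ times. Your emptiness check and ``best verified candidate'' rule just make explicit how the paper combines the repetitions. One small fix: for $j^\ast$ the search key must be the row itself, as in the paper's negated row indices. Your stated column-first lexicographic order is fine for $i^\ast$, but used for $j^\ast$ its maximum lies in the largest column and need not have the largest row.
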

\begin{proof}
Apply Dürr--Høyer minimum finding~\cite{DH96} to the column indices of back edges, and then to the negated row indices. Each value query costs at most one query to $b(i,j)$. Repeating each search
$O(\log(1/\delta))$ times gives output with probability at least $1-\delta$.
\end{proof}

The recursive procedure $\textsc{Solve}(l,r,U,M)$ looks at the cuts
$l,\dots,r$ and is given two numbers:
\begin{itemize}
  \item the \emph{reach} $M$, the farthest row reached by back edges that start
        before column $l$; every cut $p$ with $l\le p<M$
        is already covered;
  \item the \emph{ceiling} $U$: every back edge that starts in columns $l,\dots,r$ ends at row $U$ or below, so searches never need to look higher.
\end{itemize}
It either reports $\textsc{uncovered}(p)$, where $p$ is the first uncovered cut
in the range, or reports $\textsc{covered}(M')$, meaning that every cut in the
range is covered and $M'$ is the new reach, now including the back edges that
start in columns $l,\dots,r$. It proceeds as shown in the Algorithm \ref{alg:solve}.\\

\noindent Further, to compute $\TC(T)$, we find a Hamiltonian path
(Lemma~\ref{lem:hampath}) and call $\textsc{Solve}(1,N-1,N,-\infty)$. If it
returns $\textsc{uncovered}(t)$, we output $\{v_1,\dots,v_t\}$; otherwise we
output $V$.

\begin{algorithm}[t]
\caption{$\textsc{TopCycle}(T)$}
\label{alg:topcycle}
\begin{algorithmic}[1]
\State find a Hamiltonian path $v_1\to v_2\to\dots\to v_N$
       \Comment{Lemma~\ref{lem:hampath}}
\State $M\gets-\infty$ \Comment{no back edge starts before column $1$}
\State $U\gets N$ \Comment{no back edge ends beyond row $N$}
\State $\mathit{ans}\gets\textsc{Solve}(1,N-1,U,M)$
\If{$\mathit{ans}=\textsc{uncovered}(t)$}
  \State \Return $\{v_1,\dots,v_t\}$
\Else
  \State \Return $V$ \Comment{every cut is covered}
\EndIf
\end{algorithmic}
\end{algorithm}

\begin{algorithm}[H]
\caption{$\textsc{Solve}(l,r,U,M)$}
\label{alg:solve}
\begin{algorithmic}[1]
\If{$l>r$} \Comment{empty range: nothing to check}
  \State \Return $\textsc{covered}(M)$
\EndIf
\If{$M>r$} \Comment{all cuts $l,\dots,r$ are already covered}
  \State Find the largest row $H$ of a back edge
         in $[l,r]\times[l+2,U]$ ($H=-\infty$ if none)\Comment{use Fact~\ref{fact:rect}}
  \State \Return $\textsc{covered}(\max(M,H))$ \Comment{only update the reach}
\EndIf
\If{$l=r$} \Comment{a single cut}
  \State  Find the largest row $H$ of a back edge
         in $[l,l]\times[l+2,U]$ ($H=-\infty$ if none)
  \If{$M>l$ \textbf{or} $H\neq-\infty$} \Comment{covered by an earlier edge
         or one in column $l$}
    \State \Return $\textsc{covered}(\max(M,H))$
  \Else
    \State \Return $\textsc{uncovered}(l)$ \Comment{cut $l$ is uncovered}
  \EndIf
\EndIf
\State $m\gets\lfloor(l+r)/2\rfloor$ \Comment{split at the middle cut}
\State Use Fact~\ref{fact:rect} on the rectangle $R=[l,m]\times[m+2,U]$ of
       leaving back edges
\If{$R$ contains a back edge}
  \State let $e$ be the smallest column and $H$ the largest row of a back edge
         in $R$
  \State \Comment{cuts $e,\dots,m$ are covered by the back edge in column $e$}
  \State $\mathit{ans}\gets\textsc{Solve}(l,e-1,m+1,M)$
         \Comment{cuts before $e$}
  \If{$\mathit{ans}=\textsc{uncovered}(p)$}
    \State \Return $\textsc{uncovered}(p)$ \Comment{first uncovered cut found}
  \EndIf
  \State let $\mathit{ans}=\textsc{covered}(M')$; set $M_R\gets\max(M',H)$ where $M'$ is the
        largest row of a back edge that starts in a column at most $r$
         \Comment{new reach}
\Else \Comment{no leaving back edge}
  \State $\mathit{ans}\gets\textsc{Solve}(l,m,m+1,M)$
         \Comment{the whole left half}
  \If{$\mathit{ans}=\textsc{uncovered}(p)$}
    \State \Return $\textsc{uncovered}(p)$ \Comment{first uncovered cut found}
  \EndIf
  \State let $\mathit{ans}=\textsc{covered}(M')$; set $M_R\gets M'$
         \Comment{new reach}
\EndIf
\State \Return $\textsc{Solve}(m+1,r,U,M_R)$ \Comment{right half, with the new
       reach}
\end{algorithmic}
\end{algorithm}

\begin{lemma}[Correctness of the Algorithm]\label{lem:solve-correct}
Assume that every search succeeds. Consider a call
$\textsc{Solve}(l,r,U,M)$ with $l\le r+1$ and $U\ge r+1$ such that
\begin{enumerate}[label=(I\arabic*)]
  \item every cut $p<l$ is covered;
  \item $M=\mathrm{reach}([1,l-1])$;
  \item every back edge $(i,j)$ with $l\le i\le r$ satisfies $j\le U$.
\end{enumerate}
If some cut in $[l,r]$ is uncovered, the call returns
$\textsc{uncovered}(p)$ for the smallest such cut $p$. Otherwise it returns
$\textsc{covered}(\mathrm{reach}([1,r]))$.
\end{lemma}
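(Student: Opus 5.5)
We prove the lemma by strong induction on the length $r-l+1$ of the range, following the branches of Algorithm~\ref{alg:solve} in order. We use two facts throughout. First, by (I1) and (I2), a cut $p\ge l$ is covered by a back edge starting before column $l$ iff $M>p$. Second, $\mathrm{reach}([1,r])=\max\bigl(M,\mathrm{reach}([l,r])\bigr)$. Hence it suffices to show two things in each branch: the returned reach equals $\max(M,\mathrm{reach}([l,r]))$, and the smallest uncovered cut is reported.

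\emph{Base branches.} If $l>r$ the range is empty and $\mathrm{reach}([1,r])=\mathrm{reach}([1,l-1])=M$. If $M>r$, every cut in $[l,r]$ is covered by an earlier edge. By (I3), the back edges in columns $[l,r]$ lie in $[l,r]\times[l+2,U]$, since every back edge has $j\ge i+2\ge l+2$. So the search of Fact~\ref{fact:rect} returns $H=\mathrm{reach}([l,r])$. If $l=r$, cut $l$ is covered iff either $M>l$ or some back edge starts in column $l$ (any such edge has $j\ge l+2>l$). The returned reach is then $\max(M,H)$ as required.

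\emph{Recursive branch.} Let $m$ be the midpoint. The rectangle $R=[l,m]\times[m+2,U]$ contains, by (I3), all back edges starting in $[l,m]$ and ending after row $m+1$. These are exactly the edges in the left half that cover cut $m$ or cross into the right half. An edge ending at $m+1$ covers only cuts up to $m$, so for the left recursion a ceiling of $m+1$ is the right one for edges not in $R$.

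\emph{Case 1: $R$ is nonempty.} Let $e$ be its smallest column and $H$ its largest row. The edge in column $e$ covers cuts $e,\dots,m$. Consider the recursive call on $[l,e-1]$ with ceiling $m+1$. Invariants (I1) and (I2) are inherited. (I3) holds because no edge in columns $<e$ lies in $R$, so such edges have $j\le m+1$. Here $e-1\le m$ gives $U'=m+1\ge (e-1)+1$. By induction this call returns the first uncovered cut of $[l,e-1]$, if one exists, and that cut is also the first uncovered cut in $[l,r]$. Otherwise it returns $\mathrm{reach}([1,e-1])$. Then $\mathrm{reach}([1,m])=\max(M',H)$: edges in columns $[e,m]$ reach at most $H$ if they lie in $R$, and at most $m+1$ otherwise. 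The latter value does not exceed $H\ge m+2$. So $M_R=\mathrm{reach}([1,m])$. We interpret the algorithm's description of $M_R$ in exactly this sense.

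\emph{Case 2: $R$ is empty.} Recurse on $[l,m]$ with ceiling $m+1$. (I3) holds by the emptiness of $R$, and induction gives the result. Also, no edge from the left half covers cut $m$ unless $M>m$, which matches the recursive answer.

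In both cases all cuts $\le m$ are now covered. The right call $\textsc{Solve}(m+1,r,U,M_R)$ satisfies (I1)–(I3) with $M_R=\mathrm{reach}([1,m])$, and $U$ is inherited. Induction applies because $r-m<r-l+1$, and the left range is likewise strictly shorter.

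\emph{Main obstacle.} The delicate step is Case~1. We must verify that the edge with column $e$ really covers all of $e,\dots,m$ (it does, since its row is at least $m+2$) and that the reach combination $\max(M',H)$ loses no edge starting in $[e,m]$. Both rely on the observation that edges outside $R$ in the left half end at row at most $m+1$. We also need termination bounds $e-1<r$ and $m<r$ when $l<r$, which are routine.
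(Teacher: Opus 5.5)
Your proof is correct and follows the paper's argument essentially step for step. Both use induction on the width and handle the three non-recursive branches via (I2) and (I3); both then split at $m$, check (I3) for the left call with ceiling $m+1$, note that the edge in column $e$ covers cuts $e,\dots,m$, derive $M_R=\max(M',H)=\mathrm{reach}([1,m])$, and verify the invariants for the right call.

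One harmless slip: a left-half edge ending at row $m+1$ covers cut $m$ but does not lie in $R$. So $R$ consists exactly of the left-half edges covering cut $m+1$, not those covering cut $m$. For the same reason, your side remark in Case~2 that no left-half edge covers cut $m$ unless $M>m$ is false. Neither statement is used in the actual argument.
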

\begin{proof}
We prove this by induction on the width $w=r-l+1$. Note that,
by (I1), the first uncovered cut in $[l,r]$, if there is one, is the first
uncovered cut overall.

\emph{Steps (1)--(3): no recursion.}
If $l>r$ (Step~1), there is nothing to check, and
$\mathrm{reach}([1,r])=\mathrm{reach}([1,l-1])=M$ by (I2).
If $M>r$ (Step~2), the back edge that ends at row $M$ starts before column $l$
and ends after cut $r$, so it covers every cut in $[l,r]$; by (I3), the search
sees every back edge that starts in $[l,r]$, so
$\max(M,H)=\mathrm{reach}([1,r])$.
If $l=r$ (Step~3), cut $l$ is covered either by an earlier back edge, which
by~\eqref{eq:reach} happens exactly when $M>l$, or by a back edge that starts
in column $l$, which the search finds by (I3).

\emph{Step (4): splitting at $m$.}
By (I3), the rectangle $R$ contains exactly the leaving back edges: those that
start in $[l,m]$ and end at row $m+2$ or later. We check that the recursive
calls are given correct information and that their answers combine correctly.

\emph{The left call.} It has the same $l$ and $M$, so (I1) and (I2) still
hold. It also satisfies (I3) with ceiling $m+1$: in case~(a), a back edge that
starts before column $e$ and ends at row $m+2$ or later would be a leaving
back edge in a column smaller than $e$, which is impossible; in case~(b), there
are no leaving back edges at all. By induction, the left call either returns
the first uncovered cut of its range, which we return, or reports that all its
cuts are covered and returns their reach $M'$.

\emph{The cuts $e,\dots,m$ in case (a).} They are covered by the leaving back
edge that starts in column $e$, since it ends at row $m+2$ or later.

\emph{The new reach.} In case~(b), $M_R=M'=\mathrm{reach}([1,m])$. In
case~(a), the back edges that start before column $e$ reach $M'$, the leaving
back edges reach $H$, and every other back edge that starts in $[e,m]$ ends at
row at most $m+1<H$. Hence $M_R=\max(M',H)=\mathrm{reach}([1,m])$.

\emph{The right call.} At this point every cut up to $m$ is covered and
$M_R=\mathrm{reach}([1,m])$, so the right call satisfies (I1) and (I2), and
(I3) holds with the same $U$. By induction, it returns the correct answer for
$[m+1,r]$, which is then the correct answer for $[l,r]$.

The induction is well founded, because both recursive calls are narrower than
$[l,r]$: their widths are at most $\lceil w/2\rceil$ and $\lfloor w/2\rfloor$,
both less than $w$ when $w\ge2$.
\end{proof}

\noindent The top call $\textsc{Solve}(1,N-1,N,-\infty)$ satisfies
(I1)--(I3) trivially, so together with Lemma~\ref{lem:prefix} the output is
$\TC(T)$.

\emph{Total queries to compute $\textsc{Solve}(l,r,U,M)$:} Each call performs at most one search, over a rectangle of area at most
$wD$, where $w=r-l+1$ be its width and
$D=U-l+1$ its height. For a call with $w\ge2$, the left child has width at most
$\lceil w/2\rceil$ and height $(m+1)-l+1=\lceil w/2\rceil+1$, and the right
child has width $\lfloor w/2\rfloor$ and height
$U-m=D-\lceil w/2\rceil\le D$. Every call satisfies $w\le D$: this holds
for the top call ($N-1\le N$), for left children, and is preserved by right
children.

\emph{Depth and calls.} Widths shrink by a factor of at least $2/3$ from a
call to its children, so the recursion depth is $O(\log N)$. The column
intervals $[l,r]$ of calls at the same depth are disjoint, so their widths
sum to at most $N$. Every call has at most two children, so there are
$O(N\log N)$ calls.

Group the calls into chains $h_0,h_1,h_2,\dots$, where $h_0$
is the top call or a left child, and $h_{k+1}$ is the right child of $h_k$.
Every call lies in exactly one chain. Along a chain, $w(h_k)\le w(h_0)/2^k$
and $D(h_k)\le D(h_0)$, so, using $w(h_0)\le D(h_0)$,
\[
  \sum_{k\ge0}\sqrt{w(h_k)D(h_k)}
  \le \sum_{k\ge0} \frac{D(h_0)}{2^{k/2}}
  = (2+\sqrt2)\,D(h_0).
\]

The top call has $D=N$. A left child of a call $q$ with
$w(q)\ge2$ has
$D=\lceil w(q)/2\rceil+1\le w(q)/2+3/2\le 2\,w(q)$, and each call has at
most one left child. Summing over depths gives
\[
  \sum_{\text{chains}} D(h_0)\le N+\sum_{q}2\,w(q)
  = N+O(N\log N)=O(N\log N). \qedhere
\]
\begin{theorem}\label{thm:topcycle-upper}
The top cycle of an $N$-vertex tournament can be computed with bounded error
using $O(N\log^2 N)$ quantum queries.
\end{theorem}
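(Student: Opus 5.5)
The proof assembles pieces already in place: Lemma~\ref{lem:hampath}, Lemma~\ref{lem:prefix}, the correctness Lemma~\ref{lem:solve-correct}, and the query accounting just completed. Three things remain to be checked: the query budget, the failure probability, and that a mistaken subroutine can neither blow up the cost nor corrupt the output beyond the allowed error.

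\emph{Step 1 (Hamiltonian path).} By Lemma~\ref{lem:hampath} we compute $v_1\to\dots\to v_N$ deterministically with $O(N\log N)$ queries and no error. Every query $b(i,j)=[v_j\to v_i]$ is then a single oracle call.

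\emph{Step 2 (error allocation).} The analysis above shows there are $O(N\log N)$ calls to \textsc{Solve} in total, and each performs at most one invocation of Fact~\ref{fact:rect}. I would run every invocation with $\delta=1/(cN^2)$ for a large constant $c$. A union bound then shows that all searches succeed with probability at least $2/3$. On that event, Lemma~\ref{lem:solve-correct} applies to the top call $\textsc{Solve}(1,N-1,N,-\infty)$, whose hypotheses (I1)--(I3) hold trivially. Combined with Lemma~\ref{lem:prefix}, the output is then exactly $\TC(T)$.

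One subtlety needs care. The recursion structure, and hence the number of calls, depends on the search outcomes, so the union bound must be over calls that are actually made. To handle this, I would cap the total number of calls and the total queries at their worst-case bounds and abort if a cap is exceeded. The worst-case bounds hold deterministically, because widths halve and column intervals at each depth are disjoint regardless of what the searches return. The union bound is therefore over at most $O(N\log N)$ invocations.

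\emph{Step 3 (query count).} Each invocation costs $O(\sqrt{wD}\,\log(1/\delta))=O(\sqrt{wD}\log N)$ queries. The chain decomposition already gives $\sum_{\text{calls}}\sqrt{wD}\le(2+\sqrt2)\sum_{\text{chains}}D(h_0)=O(N\log N)$. Multiplying by the $\log N$ factor gives $O(N\log^2N)$. Adding Step~1 gives $O(N\log^2 N)$ in total.

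\emph{Main obstacle.} The main difficulty is making the width and height bounds (in particular $w\le D$ and the chain sums) hold even when searches fail, since a failed search could return a wrong $e$ and produce ill-formed intervals. I would address this by checking the returned back edge with one extra query; a bogus $e$ is then detected and that branch falls back to treating $R$ as empty. With this check, the recursion is always well-formed, and the query bound is worst-case rather than conditioned on success.
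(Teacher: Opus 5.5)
Your proposal is correct and follows the paper's route. It uses a deterministic Hamiltonian path, then Fact~\ref{fact:rect} with inverse-polynomial $\delta$ (the paper takes $\delta=1/N^3$), then a union bound over the $O(N\log N)$ searches so that Lemma~\ref{lem:solve-correct} applies, and finally the chain-decomposition bound $\sum\sqrt{wD}=O(N\log N)$ times the $O(\log N)$ amplification factor. Your extra safeguards (caps, re-checking the returned back edge) are harmless but unnecessary: any answer of a search over $R=[l,m]\times[m+2,U]$ has $e\in[l,m]$, so the width/height bounds and the call count hold deterministically whether or not the searches succeed.
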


\begin{proof}
Using Algorithm \ref{alg:topcycle} with
Fevery search performed by Fact~\ref{fact:rect} with $\delta=1/N^3$. A search
over an area at most $wD$ costs $O(\sqrt{wD}\log N)$ queries, so the total query is $O(N\log^2 N)$.
\end{proof}

\begin{theorem}\label{thm:topcycle-lb}
Every bounded-error quantum algorithm that computes the top cycle of
$N$-vertex tournaments makes $\Omega(N)$ queries.
\end{theorem}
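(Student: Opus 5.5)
We reduce from $\textsc{OR}_n$ with $n=\Theta(N^2)$ and invoke Theorem~\ref{thm:or}, which gives $Q(\textsc{OR}_n)=\Omega(\sqrt n)=\Omega(N)$. The idea, following the hidden-edge construction of Maiti and Dey~\cite{MD24}, is to build from $y\in\{0,1\}^n$ a tournament $T_y$ on $N=\Theta(\sqrt n)$ vertices. All edges of $T_y$ are fixed except $n$ of them, and each of those is oriented by a single bit of $y$. The top cycle should be a fixed set $S$ when $y=0^n$, and a different set whenever some bit is $1$. Since each query to $T_y$ costs at most one query to $y$, an algorithm computing $\TC$ with $q$ queries decides $\textsc{OR}_n$ with $q$ queries, so $q=\Omega(N)$.

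\textbf{Construction.} Let $N=2m$ and split the vertices into $A=\{a_1,\dots,a_m\}$ and $B=\{b_1,\dots,b_m\}$.
\begin{itemize}
\item Inside $A$ and inside $B$, orient the edges so that each part is strongly connected, for example as a directed cycle $a_1\to a_2\to\dots\to a_m\to a_1$ with the remaining pairs fixed arbitrarily, and similarly for $B$.
\item Index the bits of $y$ by pairs $(i,j)\in[m]^2$, so $n=m^2$. Set $a_i\to b_j$ if $y_{ij}=0$ and $b_j\to a_i$ if $y_{ij}=1$.
\end{itemize}
If $y=0^n$, then $A$ beats all of $B$, so $A$ is dominant. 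Since $A$ is strongly connected, no proper subset of $A$ is dominant, and $\TC(T_y)=A$. If $y_{ij}=1$ for some $(i,j)$, then $b_j\to a_i$. By Lemma~\ref{lem:tc-reach}(a), $b_j$ lies in every dominant set containing $a_i$, and hence $b_j\in\TC(T_y)$. So the reduction simply outputs $1$ if and only if the returned set differs from $A$, which computes $\textsc{OR}_n(y)$ with bounded error.

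\textbf{Verification and padding.} The one point needing care is that $\TC(T_y)\supseteq A$ in every case, so that $\TC(T_y)=A$ happens exactly when $y=0^n$. By Lemma~\ref{lem:copeland-in-tc}, or directly, the top cycle contains a vertex of $A$ whenever the input is all zeros. For a nonzero input we only need $\TC\neq A$, and this is exactly what the argument above shows; we do not need to know $\TC(T_y)$ precisely. For odd $N$, add one vertex that loses to everyone. It is never in the top cycle and does not affect which edges are input-dependent. This gives $\Omega(\sqrt{m^2})=\Omega(N)$ for all $N$. There is no real obstacle in this argument: the main step is checking that a single reversed cross edge necessarily enlarges the top cycle, and Lemma~\ref{lem:tc-reach} handles that.
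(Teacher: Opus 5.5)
Your construction and reduction are essentially the paper's: strongly connected $A$ and $B$, cross edges encoding $y$, $\TC(T_y)=A$ exactly when $y=0$, a dummy loser for odd $N$, and $\Omega(\sqrt{m^2})=\Omega(N)$ from $\textsc{OR}$. One slip needs fixing: your claim that $\TC(T_y)\supseteq A$ holds in every case is false, since on the all-ones input $B$ dominates $A$ and $\TC(T_y)=B$, so your step deriving $b_j\in\TC(T_y)$ from $a_i\in\TC(T_y)$ is not justified as written; instead argue, as the paper does, that $b_j\to a_i$ makes $A$ non-dominant and hence $\TC(T_y)\neq A$ (equivalently, read your in-neighbour step as a contradiction from assuming $\TC(T_y)=A$).
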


\begin{proof}
We follow the construction of~\cite[Theorem~3.8]{MD24}. Let
$n=\lfloor N/2\rfloor\ge3$, and take vertex sets
$A=\{a_1,\dots,a_n\}$ and $B=\{b_1,\dots,b_n\}$, plus one vertex $z$ that
loses to every other vertex if $N$ is odd. Inside $A$, orient
$a_i\to a_{i+1\bmod n}$ for every $i$ and the remaining pairs arbitrarily
but fixed; do the same inside $B$. Then $A$ is strongly connected. Given
$y\in\{0,1\}^{n\times n}$, orient $b_j\to a_i$ if $y_{ij}=1$ and
$a_i\to b_j$ otherwise. Each tournament query costs at most one query to
$y$.

If $y=0$, then $A$ beats every vertex outside $A$, so $A$ is dominant; and
no proper nonempty subset of $A$ is dominant, since $A$ is strongly
connected. As dominant sets are ordered by inclusion, $\TC(T)=A$. If
$y\neq0$, then some $b_j$ beats some $a_i$, so $A$ is not dominant and
$\TC(T)\neq A$. Hence an algorithm computing $\TC(T)$ computes
$\mathrm{OR}_{n^2}(y)$, which requires $\Omega(\sqrt{n^2})=\Omega(N)$
queries (Theorem~\ref{thm:or}).
\end{proof}

\section{Uncovered Set}\label{sec:uncovered}

In a tournament, the uncovered set is exactly the set of kings. A single king
can be found with $\tilde O(\sqrt N)$ quantum queries~\cite{MPS23}, and the top
cycle, which contains all kings, can be computed with $\tilde O(N)$ queries
(Section~\ref{sec:topcycle}). We now show that computing the \emph{whole}
uncovered set is harder: it requires $\Omega(N^{3/2})$ queries.

To prove the lower bound we will define a problem as follows:

\begin{definition}[Tight and deficient matrices]\label{def:tight}
Let $W=(w_{ij})$ be an $s\times s$ Boolean matrix. We say that $W$ is
\begin{itemize}
  \item \emph{tight:} if every column of $W$ contains exactly one $1$;
  \item \emph{deficient:} if exactly one column of $W$ is all $0$, and every
        other column contains exactly one $1$.
\end{itemize}
\end{definition}

\noindent Changing a $0$ in the all-zero column of a deficient matrix into a
$1$ makes it tight, and changing any $1$ of a tight matrix into a $0$ makes it
deficient.

\begin{definition}[\textsc{Tight-Count}$_{k,c}$]\label{def:tightcount}
For $0\le c\le k$, let $\mathcal P_c$ be the set of $k$-tuples
$(W_1,\dots,W_k)$ of $s\times s$ Boolean matrices in which every $W_t$ is
tight or deficient and exactly $c$ of them are tight. In the problem
\textsc{Tight-Count}$_{k,c}$ we are given an input from
$\mathcal P_c\cup\mathcal P_{c+1}$ and must decide which of the two sets
contains it.
\end{definition}

\begin{lemma}\label{lem:puzzles-lb}
Let $0\le c\le k-1$. Every bounded-error quantum algorithm for
\textsc{Tight-Count}$_{k,c}$, that is, every algorithm that distinguishes
$\mathcal P_c$ from $\mathcal P_{c+1}$, makes
$\Omega\bigl(s\sqrt{(k-c)(c+1)}\bigr)$ queries to $y$.
\end{lemma}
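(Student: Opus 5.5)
We apply Ambainis's adversary theorem (Theorem~\ref{thm:ambainis}) with $X=\mathcal P_c$, $Y=\mathcal P_{c+1}$, and relate $x\in X$ and $y\in Y$ when $y$ arises from $x$ by changing a single $0$ into a $1$. Since every matrix in $x$ is tight or deficient, a single $0\to1$ flip keeps the input in $\mathcal P_{c+1}$ exactly when the flipped entry lies in the all-zero column of one of the deficient matrices. That matrix becomes tight, and all other matrices are unchanged. Any other flip either creates a column with two $1$s or leaves the tight count unchanged, so it leaves the promise set. This is the same one-bit-flip relation used in Theorem~\ref{thm:heavyrow} and Lemma~\ref{lem:heavyrows-lb}.

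\emph{Parameter $M$.} Let $x\in\mathcal P_c$. It has $k-c$ deficient matrices, and each has an all-zero column with $s$ zeros. Flipping any of these $s$ zeros yields an element of $\mathcal P_{c+1}$, so $M=s(k-c)$.

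\emph{Parameter $M'$.} Let $y\in\mathcal P_{c+1}$. It has $c+1$ tight matrices, and each has exactly $s$ ones, one per column. Changing any of these ones into a $0$ makes that matrix deficient, so the result is in $\mathcal P_c$. Flipping a $1$ of a deficient matrix would leave two all-zero columns, which is outside the promise. Hence $M'=s(c+1)$.

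\emph{Parameters $L$ and $L'$.} Related inputs differ in exactly one position. So for fixed $x$ and a fixed position, at most one related $y$ differs there, namely the one obtained by flipping that position; symmetrically for $y$. Hence $L=L'=1$.

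Theorem~\ref{thm:ambainis} then gives
\[
\Omega\Bigl(\sqrt{s(k-c)\cdot s(c+1)}\Bigr)=\Omega\Bigl(s\sqrt{(k-c)(c+1)}\Bigr).
\]
The only step needing care is verifying that the relation degrees are exact. In particular, one must check that flipping a $0$ in a tight matrix, or in a nonzero column of a deficient matrix, always breaks the promise; this holds because such a column then contains two $1$s. With that checked, the argument is routine.
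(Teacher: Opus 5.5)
Your proposal is correct and matches the paper's proof: the same single $0\to1$ flip relation between $\mathcal P_c$ and $\mathcal P_{c+1}$ in Theorem~\ref{thm:ambainis}, giving $M=(k-c)s$, $M'=(c+1)s$ and $L=L'=1$. One small correction to your closing remark: Theorem~\ref{thm:ambainis} only needs lower bounds on $M$ and $M'$, so you do not actually need to check that the other flips leave the promise set.
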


\begin{proof}
We apply Theorem~\ref{thm:ambainis} with $X=\mathcal P_c$ and
$Y=\mathcal P_{c+1}$, relating $w\in X$ and $w'\in Y$ if $w'$ is obtained
from $w$ by changing a single $0$ into a $1$.

\emph{Parameter $M$.} Let $w\in X$. Changing a $0$ in the all-zero column of
one of the $k-c$ deficient blocks gives an element of $Y$, and there are $s$
such cells in each of these blocks. Changing any other $0$ creates a column
with two $1$s, which is not allowed. Hence $M=(k-c)s$.

\emph{Parameter $M'$.} Let $w'\in Y$. Changing one of the $s$ ones of one of
the $c+1$ tight blocks into a $0$ gives an element of $X$. Changing a $1$ of a
deficient block would create a second all-zero column, which is not allowed.
Hence $M'=(c+1)s$.

\emph{Parameters $L$ and $L'$.} Related inputs differ in exactly one position,
so $L=L'=1$.

Thus, theorem~\ref{thm:ambainis} gives the bound
$\sqrt{MM'/(LL')}=s\sqrt{(k-c)(c+1)}$.
\end{proof}

We will reduce the \textsc{Tight-Count}$_{k,c}$ problem to the Uncovered set problem. Let $y$ be the instance of \textsc{Tight-Count}$_{k,c}$.
\paragraph{Construction:}
Let $g,s\ge1$. The tournament $T_y$ has vertex set
$V(T_y)=A\cup B\cup X\cup Z\cup\{h\}$, where
$A=\bigcup_{p=1}^g A_p$ and $B=\bigcup_{q=1}^g B_q$. Each block
$A_p$ and $B_q$ contains $s$ vertices, while
$X=\{x_{p,q}:p,q\in[g]\}$ and $Z=\{z_{p,q}:p,q\in[g]\}$.
Thus, $T_y$ has $N=2gs+2g^2+1$ vertices.

The input is a string
$y\in\{0,1\}^{A\times B}$, and for $a\in A$, $b\in B$ we let $a\to b$ if
$y_{ab}=1$ and $b\to a$ otherwise. All other edges are fixed:
\begin{enumerate}[label=(E\arabic*)]
  \item $x_{p,q}$ beats every vertex of $A_p$, loses to every vertex of
        $A\setminus A_p$, and loses to every vertex of $B$;
  \item $x_{p,q}\to z_{p,q}$, and $z_{p',q'}\to x_{p,q}$ whenever
        $(p',q')\neq(p,q)$;
  \item $x_{p,q}\to h$ for all $p,q$;
  \item $z_{p,q}$ beats every vertex of $A$ and of $B\setminus B_q$, and loses
        to every vertex of $B_q$;
  \item $h$ beats every vertex of $A$ and every helper $z_{p,q}$, and loses to
        every vertex of $B$;
  \item all remaining pairs (inside $A$, inside $B$, among the puzzle
        vertices, and among the helpers $z_{p,q}$) are oriented arbitrarily
        but fixed, independently of $y$.
\end{enumerate}

\begin{figure}[t]
\centering
\begin{tikzpicture}[
  >=Stealth,
  every node/.style={font=\small},
  blk/.style={draw, rounded corners, minimum width=1.3cm, minimum height=7mm},
  grp/.style={draw, rounded corners, minimum width=3.2cm, minimum height=9mm,
              align=center},
  lab/.style={font=\scriptsize, inner sep=1.5pt}]
  % the sets A and B
  \node[blk] (A1) at (-5.4, 2.3) {$A_1$};
  \node[blk] (A2) at (-5.4, 1.1) {$A_2$};
  \node      (Ad) at (-5.4,-0.3) {$\vdots$};
  \node[blk] (Ag) at (-5.4,-1.7) {$A_g$};
  \node[draw, rounded corners, fit=(A1)(Ag), inner sep=8pt,
        label=below:{$A$}] (A) {};
  \node[blk] (B1) at (5.4, 2.3) {$B_1$};
  \node[blk] (B2) at (5.4, 1.1) {$B_2$};
  \node      (Bd) at (5.4,-0.3) {$\vdots$};
  \node[blk] (Bg) at (5.4,-1.7) {$B_g$};
  \node[draw, rounded corners, fit=(B1)(Bg), inner sep=8pt,
        label=below:{$B$}] (B) {};
  % the middle column
  \node[circle, draw, minimum size=8mm] (h) at (0, 2.6) {$h$};
  \node[grp] (X) at (0, 0.5) {$x_{p,q}$};
  \node[grp] (Z) at (0,-2.0) {$z_{p,q}$};
  % input edges
  \draw[dashed] (A.north) to[bend left=22]
        node[above, lab]{input edges: $a\to b$ iff $y_{ab}=1$} (B.north);
  % (E1)
  \draw[->] ([yshift=2mm]X.west) -- ([yshift=2mm]X.west -| A.east)
        node[midway, above, lab]{$x_{p,q}\to A_p$};
  \draw[->] ([yshift=-2mm]X.west -| A.east) -- ([yshift=-2mm]X.west)
        node[midway, below, lab]{$A\setminus A_p\to x_{p,q}$};
  \draw[->] (X.east -| B.west) -- (X.east)
        node[pos=0.25, above, lab]{all};
  % (E2)
  \draw[->] ([xshift=-4mm]X.south) -- ([xshift=-4mm]X.south |- Z.north)
        node[midway, left, lab]{$x_{p,q}\to z_{p,q}$};
  \draw[->] ([xshift=4mm]Z.north) -- ([xshift=4mm]Z.north |- X.south)
        node[midway, right, lab]{$z_{p',q'}\to x_{p,q}$};
  % (E3)
  \draw[->] (X.north) -- (h.south) node[midway, right, lab]{all};
  % (E4)
  \draw[->] (Z.west) -- (Z.west -| A.east) node[midway, above, lab]{all};
  \draw[->] ([yshift=2mm]Z.east) -- ([yshift=2mm]Z.east -| B.west)
        node[midway, above, lab]{$z_{p,q}\to B\setminus B_q$};
  \draw[->] ([yshift=-2mm]Z.east -| B.west) -- ([yshift=-2mm]Z.east)
        node[midway, below, lab]{$B_q\to z_{p,q}$};
  % (E5)
  \draw[->] (h) -- (h -| A.east) node[midway, above, lab]{all};
  \draw[->] (h -| B.west) -- (h) node[midway, above, lab]{all};
  \draw[->] (h.south east) .. controls (3.8,2.0) and (3.8,-1.0)
        .. (Z.north east) node[pos=0.22, above right, lab]{all};
\end{tikzpicture}
\caption{Construction of the tournament $T_y$}
\label{fig:uncovered-construction}
\end{figure}
Only the edges between $A$ and $B$ depend on $y$, and each of them depends on a
single bit, so every query to $T_y$ costs at most one query to $y$.

For $p,q\in[g]$, the \emph{block} $(p,q)$ is the $s\times s$ Boolean matrix
$W_{p,q}=(y_{ab})_{a\in A_p,\,b\in B_q}$, with rows indexed by $A_p$ and
columns indexed by $B_q$. Every bit of $y$ belongs to exactly one block, so
$y$ is the same as the $k$-tuple $(W_{p,q})_{p,q\in[g]}$ of $k=g^2$ matrices.
The column of $W_{p,q}$ indexed by $b\in B_q$ contains a $1$ if and only if
$b$ is beaten by some vertex of $A_p$.

\begin{lemma}\label{lem:puzzle}
For all $p,q\in[g]$, the vertex $x_{p,q}$ is a king of $T_y$ if and only
if $W_{p,q}$ is tight.
\end{lemma}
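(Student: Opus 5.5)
We unfold the definition of a king directly: $x_{p,q}$ is a king if and only if every other vertex is an out-neighbour of $x_{p,q}$ or an out-neighbour of some out-neighbour. First we list the out-neighbourhood $N^+(x_{p,q})$ from (E1)--(E3) and (E6). It consists of $A_p$, $z_{p,q}$, $h$, and those vertices of $X$ that $x_{p,q}$ beats under the fixed orientation of (E6). Then we go through the remaining vertices class by class and check which of them are reached in two steps.

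\emph{Easy classes.} Every vertex of $A$ is reached: $A_p$ directly, and $A\setminus A_p$ via $z_{p,q}$ (or $h$), since $z_{p,q}$ beats all of $A$ by (E4). Every other puzzle vertex $x_{p',q'}$ is reached via $z_{p,q}$, because $z_{p,q}\to x_{p',q'}$ by (E2) when $(p',q')\neq(p,q)$. Every helper $z_{p',q'}$ with $(p',q')\ne(p,q)$ is reached via $h$ by (E5). The set $B\setminus B_q$ is reached via $z_{p,q}$ by (E4). Thus only the vertices of $B_q$ remain in question.

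\emph{The block $B_q$.} Let $b\in B_q$. By (E1) we have $b\to x_{p,q}$, so $b$ is not reached directly. We check which out-neighbours of $x_{p,q}$ can beat $b$:
\begin{itemize}
\item $h$ loses to all of $B$ by (E5);
\item $z_{p,q}$ loses to $B_q$ by (E4);
\item each $x_{p',q'}$ loses to all of $B$ by (E1);
\item a vertex $a\in A_p$ beats $b$ exactly when $y_{ab}=1$.
\end{itemize}
Hence $b$ is reachable in at most two steps if and only if column $b$ of $W_{p,q}$ contains a $1$.

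Putting the classes together, $x_{p,q}$ is a king exactly when every column of $W_{p,q}$ is nonzero. Under the promise of \textsc{Tight-Count}, every block is tight or deficient, and a block is deficient precisely when it has a zero column. So $x_{p,q}$ is a king if and only if $W_{p,q}$ is tight.

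The main obstacle is the bookkeeping of the case check. One must confirm that no fixed edge from (E6) or elsewhere gives a two-step path into $B_q$, and that (E6) among puzzle vertices does not matter, since all of them lose to $B$. One must also note that the lemma relies on the promise: without it, the condition is only ``no zero column''.
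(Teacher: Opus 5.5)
Your proof is correct and follows the paper's argument: you identify the out-neighbourhood $A_p\cup\{z_{p,q},h\}$ (plus possibly other puzzle vertices), reach every vertex outside $B_q$ in two steps through $z_{p,q}$ or $h$, and show that $b\in B_q$ can only be reached through $A_p$, that is, exactly when column $b$ of $W_{p,q}$ contains a $1$. You also point out that the final equivalence with tightness needs the promise that every block is tight or deficient, which the paper uses without stating it.
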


\begin{proof}
Write $x=x_{p,q}$. By (E1)--(E3), the out-neighbours of $x$ are the vertices
of $A_p$, $z_{p,q}$ and $h$, and possibly some other $x_{p',q'}$ vertices. We check,
for each vertex $u$ with $u\to x$, whether $x$ reaches $u$ in two steps.
\begin{itemize}
  \item If $u\in A\setminus A_p$, $u\in B\setminus B_q$, or $u$ is another $x_{p',q'}$ vertex, then $x\to z_{p,q}\to u$ by (E2) and (E4).
  \item If $u=z_{p',q'}$ with $(p',q')\neq(p,q)$, then $x\to h\to u$ by (E3)
        and (E5).
  \item If $u=b\in B_q$, then $b$ beats $z_{p,q}$ by (E4), beats $h$ by (E5),
        and beats every $X$ vertices by (E1). Hence $x$ reaches $b$ in two
        steps if and only if some $a\in A_p$ satisfies $a\to b$, that is,
        $y_{ab}=1$.
\end{itemize}
Therefore $x$ is a king if and only if every $b\in B_q$ is beaten by some
$a\in A_p$, which means that every column of $W_{p,q}$ contains a $1$. A tight
matrix has this property, and a deficient matrix does not.
\end{proof}

\begin{theorem}\label{thm:uncovered-lb}
Every bounded-error quantum algorithm that computes the uncovered set of
$N$-vertex tournaments makes $\Omega(N^{3/2})$ queries.
\end{theorem}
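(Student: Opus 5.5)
Reduce \textsc{Tight-Count}$_{k,c}$ with $k=g^2$ to the uncovered set on $T_y$, and choose the parameters so that the bound of Lemma~\ref{lem:puzzles-lb} becomes $\Omega(N^{3/2})$. Take $g=\Theta(\sqrt{s})$, say $g=\lceil\sqrt s\,\rceil$. Then $N=2gs+2g^2+1=\Theta(s^{3/2})$ and $k=g^2=\Theta(s)$. Take $c=\lfloor k/2\rfloor$, so that $c\le k-1$. Lemma~\ref{lem:puzzles-lb} then gives a lower bound of $\Omega\bigl(s\sqrt{(k-c)(c+1)}\bigr)=\Omega(sk)=\Omega(s^2)$. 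Since $s=\Theta(N^{2/3})$, this equals $\Omega(N^{4/3})$, which is short of the target; the parameter choice has to be revisited.

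To reach $N^{3/2}$ we need $s\sqrt{(k-c)(c+1)}\gtrsim N^{3/2}$. With $k=g^2$ and $c\approx k/2$ the bound is $\Theta(sg^2)$, while $N=\Theta(gs+g^2)$. For $g\le s$ we have $N=\Theta(gs)$, so the ratio $sg^2/(gs)^{3/2}=\sqrt{g/s}$ is at most a constant, with equality exactly when $g=\Theta(s)$. So set $g=s$. Then $N=\Theta(s^2)$, $k=s^2$, and the bound is $\Omega(s\cdot s^2)=\Omega(s^3)=\Omega(N^{3/2})$. Hence the plan is: fix $g=s$, $k=s^2$, $c=\lfloor k/2\rfloor$. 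We restrict $T_y$ to inputs $y\in\mathcal P_c\cup\mathcal P_{c+1}$, where the blocks $W_{p,q}$ play the role of the $k$ matrices.

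The reduction itself is as follows. Run the assumed uncovered-set algorithm on $T_y$, simulating each query with at most one query to $y$. Then count how many vertices $x_{p,q}$ lie in the returned set; by Lemma~\ref{lem:puzzle} this number is exactly the number of tight blocks. We output $\mathcal P_{c+1}$ if and only if the count is $c+1$. This step needs no further queries, since membership of $x_{p,q}$ in the output is read off directly. Correctness holds with the same probability $2/3$. Therefore $q\ge \Omega(s\sqrt{(k-c)(c+1)})=\Omega(N^{3/2})$.

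It remains to check two things. First, $T_y$ is a valid tournament: every pair is oriented exactly once by (E1)--(E6). Second, general $N$ must be handled by padding. Pick the largest $s$ with $2s^2+2s^2+1=4s^2+1\le N$. Add fewer than $O(s)$ extra vertices, ordered transitively, each losing to every original vertex. The main point to verify is that this padding preserves kingship of each $x_{p,q}$. The padded vertices never beat an original vertex, so they add no new in-neighbours that $x_{p,q}$ must reach, and $x_{p,q}$ reaches all of them in one step. Hence Lemma~\ref{lem:puzzle} still applies verbatim, and $s=\Theta(\sqrt N)$ gives $\Omega(N^{3/2})$ for all large $N$. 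I expect the only delicate step to be this parameter balancing, because the first natural choice $g\approx\sqrt s$ loses a polynomial factor.
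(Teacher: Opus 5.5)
Your proposal is correct and matches the paper's proof. The paper also takes $g=s$, $k=g^2$ and $c=\lfloor k/2\rfloor$ (so $N=4g^2+1$), counts the vertices $x_{p,q}$ in the returned set via Lemma~\ref{lem:puzzle} to solve \textsc{Tight-Count}$_{k,c}$, and pads with transitively ordered vertices that lose to all of $T_y$; the only differences are that the paper checks the whole uncovered set is unchanged by padding, whereas you check only the kingship of each $x_{p,q}$ (which suffices), and that your opening detour through $g\approx\sqrt s$ can simply be dropped.
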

\begin{proof}

Let $N=4g^2+1$, and set $s=g$, $k=g^2$, and
$c=\lfloor k/2\rfloor$ in the construction. Let
$y\in\mathcal P_c\cup\mathcal P_{c+1}$. By Lemma~\ref{lem:puzzle},
$x_{p,q}\in\mathrm{UC}(T_y)$ if and only if $W_{p,q}$ is tight.
Consequently, $\mathrm{UC}(T_y)$ contains exactly $c$ vertices of the
form $x_{p,q}$ when $y\in\mathcal P_c$, and exactly $c+1$ when
$y\in\mathcal P_{c+1}$. Since the collection of vertices $x_{p,q}$ is
fixed independently of $y$, an algorithm that outputs
$\mathrm{UC}(T_y)$ can count these vertices without additional queries.
Thus, it solves $\textsc{Tight-Count}_{k,c}$. Each query to $T_y$ can
be simulated using at most one query to $y$, so Lemma~\ref{lem:puzzles-lb}
gives a lower bound of
$\Omega\bigl(s\sqrt{(k-c)(c+1)}\bigr)=\Omega(g^3)
=\Omega(N^{3/2})$, since $s=g$ and both $k-c$ and $c+1$ are at least
$k/2$.\\

Now consider an arbitrary $N\ge5$. Let $g$ be the largest integer such
that $N_0=4g^2+1\le N$; then $g=\Theta(\sqrt N)$ and $N_0=\Theta(N)$.
Extend $T_y$ to a tournament $T'_y$ on $N$ vertices by adding
$N-N_0$ new vertices, orienting the edges among them transitively, and
making every new vertex lose to every vertex of $T_y$.

We have $\mathrm{UC}(T'_y)=\mathrm{UC}(T_y)$. Every king of $T_y$
remains a king, since it beats every new vertex. A non-king of $T_y$
remains a non-king: a new vertex cannot provide a two-edge path to a
previously unreachable vertex of $T_y$, since every new vertex loses to
every vertex of $T_y$. Finally, no new vertex is a king. It has no edge
to a vertex of $T_y$, and any two-edge path from it to $T_y$ would have
to pass through another new vertex, which also loses to every vertex of
$T_y$.

The added edges are independent of $y$, so each query to $T'_y$ can
still be simulated using at most one query to $y$. Hence an algorithm
for computing the uncovered set of an $N$-vertex tournament would solve
$\textsc{Tight-Count}_{k,c}$ with the same query complexity. The first
part therefore gives a lower bound of
$\Omega(g^3)=\Omega(N_0^{3/2})=\Omega(N^{3/2})$.
\end{proof}

\section*{Acknowledgments}
We gratefully acknowledge Bireswar Das for his valuable insights,
constructive comments, and fruitful discussions throughout this project.

Udit Kumar is partially supported by the Anusandhan National Research Foundation (ANRF),
India, through the project ``Quantum Algorithms for Group Theoretic
Problems.''

Parts of this work were developed with assistance from GPT-6 Astra and Claude
Opus~5.5. In particular, these tools suggested the \prob{Heavy-Row} problem and
the use of Ambainis's adversary method to prove its lower bound, and
contributed ideas to our algorithm for the top cycle. The authors verified all
results and proofs and take full responsibility for the content of this paper.
\bibliography{References}
\end{document}